\newcommand{\bibfilePath}{../../../../ee/mybib}
\newcommand{\twobibs}[2]{#2} 
\newcommand{\xvec}{\mathbf{x}}																							
\newcommand{\boost}{\kappa}																									
\newcommand{\BoostSet}{K}																										
\newcommand{\inputAlphabet}{\mathcal{X}^t}
\newcommand{\GenAlph}{\mathcal{Y}}																					
\newcommand{\MergeAlph}{\mathcal{Z}}																				
\newcommand{\UpgAlph}{\GenAlph \cup \BoostSet}															
\newcommand{\FinAlph}{\MergeAlph \cup \BoostSet}														
\newcommand{\FinAlphShort}{\mathcal{Z}^{\,'}}																
\newcommand{\GenChannel}  {W: \inputAlphabet\rightarrow\GenAlph}    					
\newcommand{\GenC}        {W}
\newcommand{\UpgChannel}  {W^{\,'}:\inputAlphabet\rightarrow(\UpgAlph)}		 	
\newcommand{\UpgC}        {W^{\,'}}
\newcommand{\FinChannel}  {Q^{\,'}:\inputAlphabet\rightarrow(\FinAlph)}  		
\newcommand{\FinC}        {Q^{\,'}}
\newcommand{\Pz}        	{p_{\mathcal{B}}(z)}															
\newcommand{\APPz}				{\psi}																						
\newcommand{\EntropyZ}    {H_\APPz(\mathbf{X}|Z=z)}													
\theoremstyle{remark}	\newtheorem{theo}{Theorem}
\theoremstyle{remark}	\newtheorem{lemm}[theo]{Lemma}
\theoremstyle{remark}	\newtheorem{coro}[theo]{Corollary}
\newcommand{\bfX}{\mathbf{X}}
\newcommand{\myset}[1]{\left\{#1\right\}}
\title{Channel Upgradation for Non-Binary 
 Input Alphabets and MACs}
\author{{Uzi Pereg and Ido Tal}\\
\authorblockA{
Department of Electrical Engineering,\\
Technion, Haifa 32000, Israel.\\
Email: {\tt uzipereg@tx.technion.ac.il}, {\tt idotal@ee.technion.ac.il} 
}
\thanks{The paper was presented in part at the
2014 IEEE International Symposium on Information Theory,
Honolulu, Hawaii, June 29 -- July 5, 2014.
Research supported in part by the Israel Science Foundation
grant 1769/13.}}
\begin{document}
\maketitle
\begin{abstract}
Consider a single-user or multiple-access channel with a large output alphabet. A method to approximate the channel by an upgraded version having a smaller output alphabet is presented and analyzed. The original channel is not necessarily symmetric and does not necessarily have a binary input alphabet. Also, the input distribution is not necessarily uniform. The approximation method is instrumental when constructing capacity achieving polar codes for an asymmetric channel with a non-binary input alphabet. Other settings in which the method is instrumental are the wiretap setting as well as the lossy source coding setting. 
\end{abstract}

\begin{IEEEkeywords}
Polar codes, multiple-access channel, sum-rate, asymmetric channels, 
channel degradation, channel upgradation.
\end{IEEEkeywords}

\section{Introduction}
Polar codes were introduced in 2009 in a seminal paper \cite{Arikan:09p} by Ar\i{}kan. In \cite{Arikan:09p}, Ar\i{}kan considered the case in which information is sent over a binary-input memoryless channel. The definition of polar codes was soon generalized to channels with prime input alphabet size \cite{STA:09a}. A further generalization to a polar coding scheme for a multiple-access channel (MAC) with prime input alphabet size is presented in \cite{STY:10a} and \cite{AbbeTelatar:10p}.

The communication schemes in \cite{STA:09a, STY:10a, AbbeTelatar:10p} are explicit, have efficient encoding and decoding algorithms, and achieve symmetric capacity (symmetric sum capacity in the MAC setting). However, \cite{STA:09a, STY:10a, AbbeTelatar:10p} do not discuss how an efficient construction of the underlying polar code is to be carried out. That is, no efficient method is given for finding the unfrozen synthesized channels. This question is highly relevant, since a straightforward attempt at finding the synthesized channels is intractable: the channel output alphabet size grows exponentially in the code length. The problem of constructing polar codes for these settings was discussed in \cite{TalSharovVardy:12c}, in which a degraded approximation of the synthesized channels is derived. The current paper is the natural counterpart of \cite{TalSharovVardy:12c}: here we derive an upgraded approximation.


In addition to single-user and multiple-access channels, polar codes have been used to tackle many classical information theoretic problems. Of these, we mention here three applications, and briefly explain the purpose of our results in each context. The interested reader will have no problem filling in the gaps.

First, we mention lossy source coding. Korada and Urbanke show in \cite{KoradaUrbanke:10p} a scheme by which polar codes can be used to achieve the rate distortion bound in a binary and symmetric setting. These techniques were generalized to a non-binary yet symmetric setting by Karzand and Telatar \cite{KarzandTelatar:10c}. Generalization of this result to a non-symmetric setting can be done by suitably applying the technique in \cite{HondaYamamoto:12c}. This is the technique we will use in our outline. In brief, lossy source coding for a non-symmetric and non-binary source can be carried out as follows. The test channel output corresponds to the source we want to compress, whereas the test channel input corresponds to a distorted representation of the source. The scheme applies a polar transformation on the channel input bits, and ``freezes'' (does not transmit) the transformed bits with a distribution that is almost uniform given past transformed bits. Namely, if an upgraded version of the distribution has an entropy very close to $1$, then surely the true distribution has an entropy that is as least as high.  We also mention an alternative technique of ``symmetrizing'' the channel, as described by Burshtein in \cite{Burshtein:15c}. For both methods, our method can be used to efficiently find which channels to freeze. 

A second setting where our method can be applied is coding for asymmetric channels. In \cite{HondaYamamoto:12c}, Honda and Yamamoto use the ideas developed in \cite{KoradaUrbanke:10p} in order to present a simple and elegant capacity achieving coding scheme for asymmetric memoryless channels (see also \cite{Mondelli+:14a} for a broader discussion). To use the notation in \cite{HondaYamamoto:12c}, a key part of the scheme is to transmit information $i$th synthetic channel if the entropy $H(U_i|U_0^{i-1})$ is very close to $1$ while the entropy $H(U_i|U_0^{i-1}, Y_0^{n-1})$ is very close to $0$. 
The method presented here can be used to check which indices satisfy the first condition. In addition, the method in \cite{TalSharovVardy:12c} can be used to 
check the second condition\footnote{The method in \cite{TalSharovVardy:12c} is stated with respect to a symmetric input distribution. In fact, the key result, \cite[Theorem 5]{TalSharovVardy:12c}, is easily seen to hold for non-uniform input distributions as well.}.

A third problem worth mentioning is the wiretap channel \cite{Wyner:75p}, as was done in \cite{MahdavifarVardy:11p, HofShamai:10a, ARTKS:10p, KoyluogluElGamal:10c}. There, we transmit information only over synthesized channels that are almost pure-noise channels to the wiretapper, Eve. In order to validate this property computationally, it suffices to show that an upgraded version of the synthesized channel is almost pure-noise.



The same problem we consider in this paper --- approximating a channel with an upgraded version having a prescribed output alphabet size ---  was recently considered by Ghayoori and Gulliver in \cite{GhayooriGulliver:12a}. Broadly speaking, the method presented in \cite{GhayooriGulliver:12a} builds upon the pair and triplet merging ideas presented in the context of binary channels in \cite{TalVardy:11a} and analyzed in \cite{PHTT:11c}. In \cite{GhayooriGulliver:12a}, it is stated that the resulting approximation is expected to be close to the original channel. As yet, we are not aware of an analysis making this claim precise.  In this paper, we present an alternative upgrading approximation method. Thus, with respect to our method, we are able to derive an upper bound on the gain in sum rate. The bound is given as Theorem~\ref{th:upgraded_MAC} below, and is the main analytical result of this paper.


The previous examples involved single-user channel. In fact, our method can be used in the more general setting in which a MAC is to be upgraded. Let the underlying MAC have input alphabet $\inputAlphabet$, where $t$ designates the number of users ($t=1$ if we are in fact considering a single-user channel). We would like to mention up-front that the running time of our upgradation algorithm grows very fast in $q=|\mathcal{X}|^t$. Thus, our algorithm can only be argued to be practical for small values of $q$. On a related note, we mention that a recent result \cite{Tal:15c} shows that, at least in the analogous case of degrading, this adverse effect cannot be avoided.      

This paper is written such that all the information needed in order to implement the algorithm and understand its performance is introduced first. Thus, the structure of this paper is as follows. In Section~\ref{sec:preliminaries} we set up the basic concepts and notation that will be used later on. Section~\ref{sec:binning}
describes the binning operation as it is used in our algorithm. The binning operation is a preliminary step used later on to define the upgraded channel. Section~\ref{sec:approximation} contains our approximation algorithm, as well as the statement of  Theorem~\ref{th:upgraded_MAC}. Section~\ref{sec:analysis} is devoted to proving Theorem~\ref{th:upgraded_MAC}.

\section{Preliminaries}
\label{sec:preliminaries}
\subsection{Multiple Access Channel}
Let $\GenChannel$ designate a generic $t$-user MAC, where $\mathcal{X}$ is the input alphabet of each user\footnote{Following the observation in \cite{Sasoglu:12c}, we do not constrain ourselves to an input alphabet which is of prime size.}, and $\GenAlph$ is the finite\footnote{The assumption that $\GenAlph$ is finite is only meant to make the presentation simpler. That is, our method readily generalizes to the continuous output alphabet case.} output alphabet. Denote a vector of user inputs by $\xvec\in\mathcal{X}^t$, where 
$\xvec=(x^{(l)})_{l=1}^t$. 

Our MAC is defined through the probability function $W$, where $W(y|\xvec)$ is the probability of observing the output $y$ given that the user input was $\xvec$.

\subsection{Degradation and Upgradation}
The notions of a (stochastically) degraded and upgraded MAC are defined in an analogous way to that of a degraded and upgraded single-user channel, respectively. That is, we say that a $t$-user MAC $Q:\mathcal{X}^t\rightarrow\MergeAlph$ is \emph{degraded} with respect to $\GenChannel$, if there exists a channel 
$\mathcal{P}:\GenAlph\rightarrow\mathcal{Z}$ such that for all $z\in\MergeAlph$ and $\xvec\in\mathcal{X}^t$,
\begin{equation*}
Q(z|\xvec)=\sum_{y\in\GenAlph} \GenC(y|\xvec)\cdot\mathcal{P}(z|y) \;.
\end{equation*}  
 In words, the output of $Q$ is obtained by data-processing the output of $\GenC$. We write $Q\preceq\GenC$ to denote that $Q$ is degraded with respect to $\GenC$.

Conversely, we say that a $t$-user MAC $\FinC: \mathcal{X}^t\rightarrow\FinAlphShort$ is \emph{upgraded} with respect to $\GenChannel$ if $\GenC$ is degraded with respect to $\FinC$. We denote this as $\FinC\succeq\GenC$. If $Q$ satisfies both $Q\preceq\GenC$ and $Q\succeq\GenC$, then $Q$ and $\GenC$ are said to be \emph{equivalent}. We express this by $\GenC\equiv Q$. Note that both $\preceq$ and $\succeq$ are transitive relations, and thus so is $\equiv$.

\subsection{The Sum-Rate Criterion}
\label{sub: sum_rate}
Let a $t$-user MAC $\GenChannel$ be given.
 Next, let $\bfX=(X^{(l)})_{l=1}^t$ be a random variable distributed over $\mathcal{X}^t$, not necessarily uniformly.
 Let $Y$ be the random variable one gets as the output of $W$ when the input is $\bfX$. The sum-rate of $W$ is defined as the mutual information
\[
 R(W)=I(\bfX;Y)  \;.
 \]
Note that by the data-processing inequality \cite[Theorem 2.8.1]{CoverThomas:06b}
\begin{IEEEeqnarray*}{rCl}
W \preceq Q & \Longrightarrow & R(W) \leq R(Q) \; , \\
W' \succeq Q & \Longrightarrow & R(W') \geq R(Q) \; .
\end{IEEEeqnarray*}
Thus, equivalent MACs have the same sum-rate.

In Section~\ref{sec:approximation} we show how to obtain an upgraded approximation of $\GenC$. The original MAC $\GenChannel$ is approximated by another MAC 
$\FinC: \mathcal{X}^t\rightarrow\FinAlphShort$ with a smaller output alphabet size. Then, we bound the difference (increment) in the sum-rate.   

The following lemma is a restatement of \cite[Lemma 2]{TalSharovVardy:12c}, and justifies the use of the sum-rate as the figure of merit. Informally, it states that if the difference in sum rate is small, then the difference in all other mutual informations of interest is small as well. 
(The same proof in \cite{TalSharovVardy:12c} holds for non-uniform input distribution as well.)
\begin{lemm}
\label{lemm:sumRateAsUpperBoundOnError}
Let $\GenChannel$ and $\FinC: \mathcal{X}^t\rightarrow\FinAlphShort$ be a pair of $t$-user MACs such that 
$\GenC \preceq \FinC $ and 
\begin{equation*}
R(\GenC)\geq R(\FinC)-\epsilon \,,
\end{equation*} 
where $\epsilon\geq0$. Let $\bfX$ be distributed over $\mathcal{X}^t$. Denote by $Y$ and $Z'$ the random variables one gets as the outputs of $\GenC$ and $\FinC$, respectively, when the input is $\bfX$. Let the sets $A$, $B$ be disjoint subsets of the user index set $\{ 1,2,\ldots,t \}$. Denote $\bfX_A=(X^{(l)})_{l\in A}$ and 
$\bfX_B=(X^{(l)})_{l\in B}$. Then,
\begin{equation*}
I(\bfX_A;\bfX_B,Y)\geq I(\bfX_A;\bfX_B,Z')-\epsilon \;. 
\end{equation*}
\end{lemm}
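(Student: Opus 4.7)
The plan is to recast the desired inequality as a statement about conditional mutual informations and then squeeze it out of the given rate gap by the chain rule combined with the data processing inequality applied to the Markov chain supplied by the upgradation hypothesis.

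First I would apply the chain rule
\begin{equation*}
I(\bfX_A;\bfX_B,Y) = I(\bfX_A;\bfX_B) + I(\bfX_A;Y|\bfX_B),
\end{equation*}
and the analogous identity with $Z'$ replacing $Y$. Since $I(\bfX_A;\bfX_B)$ is common to both, the lemma reduces to showing
\begin{equation*}
I(\bfX_A;Z'|\bfX_B) - I(\bfX_A;Y|\bfX_B) \leq \epsilon.
\end{equation*}
From $\GenC\preceq\FinC$ there is a channel $\mathcal{P}:\FinAlphShort\to\GenAlph$ such that $Y$ is obtained by passing $Z'$ through $\mathcal{P}$ without consulting $\bfX$; equivalently, $p(y|z',\bfX)=p(y|z')$ pointwise, so $\bfX - Z' - Y$ is a Markov chain.

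Second, let $C=\{1,\dots,t\}\setminus(A\cup B)$ and write $\bfX=(\bfX_A,\bfX_B,\bfX_C)$. I would expand the full rate gap by the chain rule with ordering $\bfX_B,\bfX_A,\bfX_C$:
\begin{equation*}
I(\bfX;Z') = I(\bfX_B;Z') + I(\bfX_A;Z'|\bfX_B) + I(\bfX_C;Z'|\bfX_A,\bfX_B),
\end{equation*}
and the analogous identity for $Y$. Subtracting term by term, the hypothesis $R(\FinC)-R(\GenC)\leq\epsilon$ becomes the statement that a sum of three bracketed differences is at most $\epsilon$. Each such difference is nonnegative: the pointwise identity $p(y|z',\bfX)=p(y|z')$ implies that $Y$ is independent of any subset of coordinates of $\bfX$ given $Z'$ together with the remaining coordinates, so the conditional data processing inequality applies in each case. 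In particular the middle term $I(\bfX_A;Z'|\bfX_B) - I(\bfX_A;Y|\bfX_B)$ is itself bounded by $\epsilon$, which is exactly what the lemma asks for.

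There is no serious obstacle here; the argument is algebraic manipulation plus data processing. The only detail worth flagging is that the conditional data processing invocations rely on the stronger pointwise Markov property $p(y|z',\bfX)=p(y|z')$ rather than merely on the unconditional Markov chain $\bfX - Z' - Y$, and this stronger property is precisely what stochastic degradation supplies.
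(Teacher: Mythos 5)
Your proof is correct. The paper itself does not reprove this lemma; it states it as a restatement of Lemma~2 in \cite{TalSharovVardy:12c} and notes that the same proof carries over to non-uniform inputs. The route you take---reduce to the conditional mutual informations $I(\bfX_A;\cdot|\bfX_B)$ via the chain rule, then sandwich the gap by expanding $I(\bfX;Z')-I(\bfX;Y)$ over the ordering $\bfX_B,\bfX_A,\bfX_C$ and using the (conditional) data-processing inequality, valid because degradation gives $p(y|z',\xvec)=\mathcal{P}(y|z')$ pointwise---is exactly the argument used in that reference, up to swapping the roles of degraded and upgraded channel. One small remark: the ``stronger pointwise'' property you flag is not actually stronger than the Markov chain $\bfX-Z'-Y$; the chain literally means $p(y|z',\xvec)=p(y|z')$, and the conditional chains you need are then obtained from it by marginalizing over the unconditioned input coordinates, as you correctly do.
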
 

\section{The Binning Operation}
\label{sec:binning}

\begin{figure*}[hbt]
\begin{center}
\includegraphics[scale=0.85]{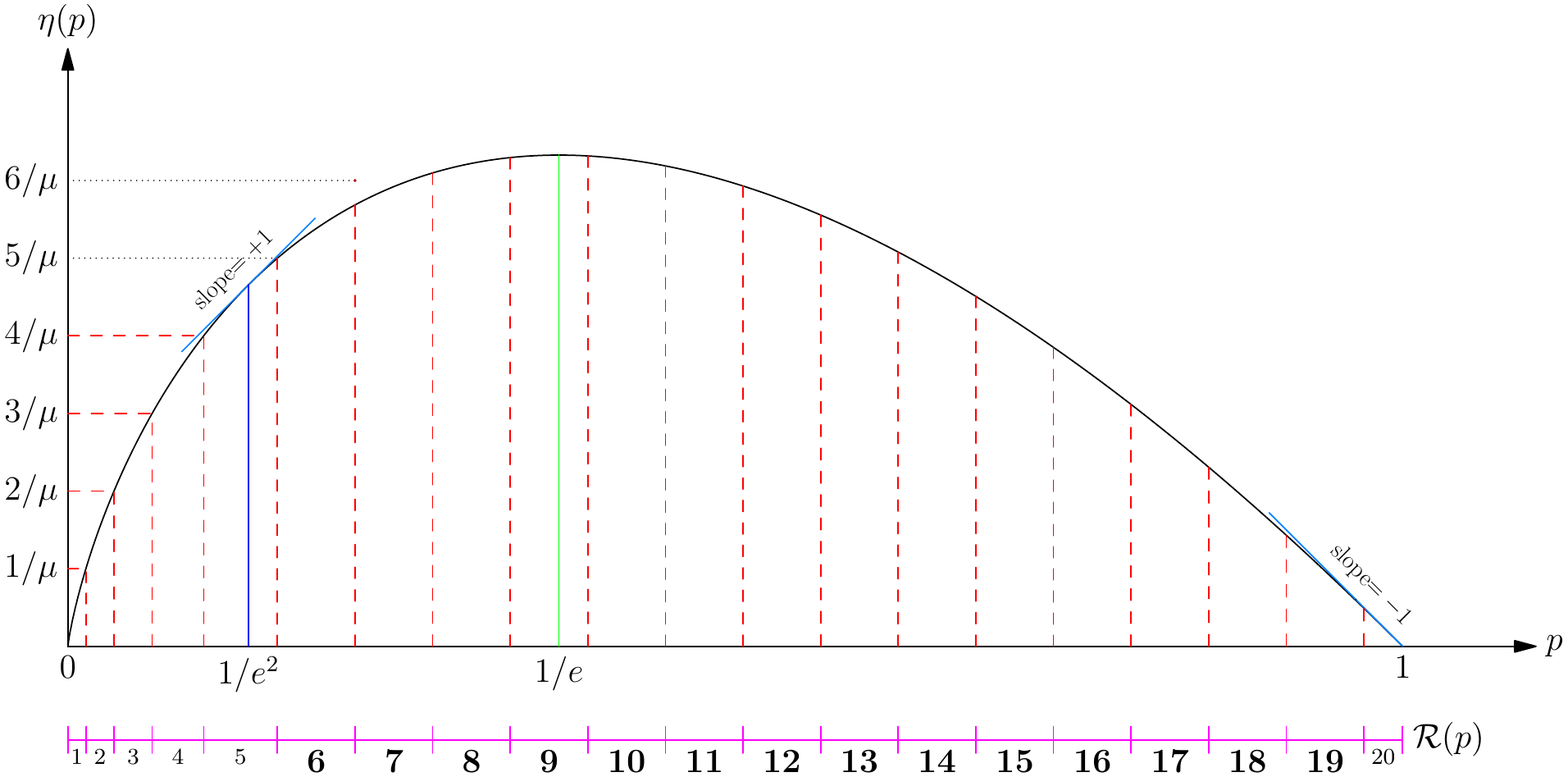} 
\caption{Functions $\eta(p) = -p\cdot \ln p$ and $\mathcal{R}(p)$. The fidelity parameter $\mu$ is set to $\mu=17.2$, which results in the number of regions being $M=20$. Some of the regions of $\mathcal{R}(x)$ are thinner (in width), and have a vertical increment of exactly $\nicefrac{1}{\mu}$. On the other hand,
the bold-faced regions have a horizontal increment (width) of exactly $\nicefrac{1}{\mu}$, while their vertical increment is less than $\nicefrac{1}{\mu}$, as the horizontal dotted lines in the figure demonstrate for region $6$. As a leftover effect, the last region, $20$, has horizontal and vertical increments which are both less than $\nicefrac{1}{\mu}$.}
\label{fig:bin}
\end{center}
\end{figure*}

\subsection{Regions and Bins}
\label{sec:bin_def}
In  \cite{TalSharovVardy:12c}, a binning operation was used to approximate a given channel by a degraded version of it. Our algorithm uses a related yet different binning rule, as a preliminary step towards upgrading the channel $\GenChannel$.

Let the random variables $\bfX$ and $Y$ be as in Lemma~\ref{lemm:sumRateAsUpperBoundOnError}, and recall that $\bfX$ is not necessarily uniformly distributed. Assume that the output alphabet $\GenAlph$ has been purged of all 
letters $y$ with zero probability of appearing under the given input distribution. That is, assume that for all $y \in \GenAlph$, the denominator in (\ref{eq:APP}) below is positive. Thus, we can indeed define the function 
$\varphi_\GenC:\mathcal{X}^t\times\GenAlph\rightarrow [0,1]$ as the a posteriori probability (APP):
\begin{equation}
 \varphi_\GenC(\xvec|y)=\mathbb{P}(\bfX=\xvec|Y=y)=\frac{\mathbb{P}(\bfX=\xvec)\cdot\GenC(y|\xvec)}
																																{\sum\limits_{\mathbf{v}\in\mathcal{X}^t} \mathbb{P}(\bfX=\mathbf{v})\cdot\GenC(y|\mathbf{v})} \;, 
\label{eq:APP}
\end{equation}
for every input $\xvec\in\mathcal{X}^t$ and every letter in the (purged) output alphabet $y\in\GenAlph$. Next, for $y\in\GenAlph$ let us denote 
\[ 
p_\GenC(y)=\mathbb{P}(Y=y) \;,
 \]
and define $\eta:[0,1]\rightarrow\mathbb{R}\,$ by
\[ 
\eta(p)=-p\cdot\ln p \;,
 \]
where $\ln (\cdot)$ stands for \emph{natural} logarithm.
Using the above notation, the entropy of the input $\bfX$ given the observation $Y=y$ is
\begin{IEEEeqnarray*}{l}
H(\bfX|Y=y)= \sum_{\xvec\in\mathcal{X}^t} \eta\left(\varphi_\GenC(\xvec|y)\right) \;,
\end{IEEEeqnarray*}
measured in natural units (nats). Thus, the sum-rate can be expressed as 
\begin{IEEEeqnarray*}{rCl}
R(W)
&=& H(\bfX) -\sum_{y\in\GenAlph} p_\GenC(y) H(\bfX|Y=y) \\  
&=& H(\bfX) -\sum_{y\in\GenAlph} p_\GenC(y)\sum_{\xvec\in\mathcal{X}^t} \eta\left(\varphi_\GenC(\xvec|y)\right) \;.
\end{IEEEeqnarray*}

As a first step towards the definition of our bins, we quantize the domain of $\eta(p)$  with resolution specified by a fidelity parameter $\mu$. 
That is, we partition $[0,1]$ into quantization-regions which depend on the value of $\mu$. 
Informally, we enlarge the width of each region until an increment of 
$\nicefrac{1}{\mu}$ is reached, either on the horizontal or the vertical axis. To be exact, the interval $[0,1)$ is partitioned into $M=M_\mu$ non-empty regions of the form 
\[ 
[b_i,b_{i+1}) \quad,\quad i=1,2,\ldots,M \;.
 \]
Starting from $b_1=0$, the endpoint of the $i$th region is given by
\begin{IEEEeqnarray}{l}
\label{eq:quantization}
 b_{i+1}= \max \left\{\;\; 0< p\leq1 \;:\quad
 x\leq b_i+\frac{1}{\mu} \;,\quad 
\quad|\eta(p)-\eta(b_i)|\leq\frac{1}{\mu} 
\;\;\right\}\;. 
\end{IEEEeqnarray}
And so it is easily inferred that for all regions $1\leq i< M$ (all regions but the last),  there is \emph{either} a horizontal \emph{or} vertical increment 
of $\nicefrac{1}{\mu}$: 
\begin{IEEEeqnarray*}{l}
 b_{i+1}-b_i=\frac{1}{\mu} \;,\quad \text{or}\quad |\eta(b_{i+1})-\eta(b_i)|=\frac{1}{\mu} \;,
\end{IEEEeqnarray*}
but typically not both (Figure~\ref{fig:bin}). For technical reasons, we will henceforth assume that
\begin{equation}
\label{eq:mu_req}
\mu \geq \max(5,q(q-1)) \; . 
\end{equation}

Denote the region to which $x$ belongs by $\mathcal{R}(x)=\mathcal{R}_\mu(x)$. Namely,
\begin{subequations}
\label{eq:regionDefintions}
\begin{equation}
\label{seq:regionGeneralDefintion}
\mathcal{R}(x)=i \quad\Leftrightarrow\quad x\in[b_i,b_{i+1})\;,
\end{equation}
with the exception of $x=1$ belonging to the last region, meaning
\begin{equation}
\label{seq:regionSpecialCase}
\mathcal{R}(1)=M \; .
\end{equation}
\end{subequations}
Based on the quantization regions defined above, we define our binning rule.
Two output letters $y_1,y_2 \in  \GenAlph$ are said to be in the same bin if for all $\mathbf{u}\in\mathcal{X}^t$ we have that 
$\mathcal{R}(\varphi_\GenC(\xvec|y_1))=\mathcal{R}(\varphi_\GenC(\xvec|y_2))$. That is, $y_1$ and $y_2$ share
the same vector of region-indices,
\[
\left(\, i(\xvec) \,\right)_{\,\xvec\in\mathcal{X}^t} \quad,
\]
where $i(\xvec)\triangleq\mathcal{R}(\varphi_\GenC(\xvec|y_1))=\mathcal{R}(\varphi_\GenC(\xvec|y_2))$.  
Note that we will try to use consistent terminology throughout: A ``region'' is an one-dimensional interval and has to do with a specific value of $\xvec$. A ``bin'' is essentially a $q$-dimensional cube, defined through regions, and has to do with all the values $\xvec$ can take.

\subsection{Merging of letters in the same bin} 
\label{sec:Bins_Merge}
Recall that our ultimate aim is to approximate the original channel $\GenChannel$ by an upgraded version having a smaller output alphabet. As we will see, the output alphabet of the approximating channel will be a union of two sets. In this subsection, we define one of these sets, denoted by $\MergeAlph$. 

Figuratively, we think of $\MergeAlph$ as the result of merging together all the letters in the same bin. That is, the size of $\MergeAlph$ is the number of non-empty bins, as each non-empty bin corresponds to a distinct letter $z\in\MergeAlph$. Denote by $\mathcal{B}(z)$ the set of letters in $\GenAlph$ which form the bin associated with $z$.  Thus, all the symbols $y\in\mathcal{B}(z)$ can be thought of as having been merged into one symbol $z$.

As we will see, the size of $\MergeAlph$ can be upper-bounded by an expression that is not a function of $|\GenAlph|$.

\subsection{The APP measure $\APPz$}
\label{subsec:psiSpecification}
In this subsection, we define an a posteriori probability measure on the input alphabet $\inputAlphabet$, given a letter from the merged output alphabet $\MergeAlph$. We denote this APP measure as $\APPz(\xvec|z)$, defined for $\xvec \in \inputAlphabet$ and $z \in \MergeAlph$.

The measure $\APPz(\xvec|z)$ will be used in Section~\ref{sec:approximation} in order to define the approximating channel. As we have previously mentioned, the output alphabet of the approximating channel will contain $\MergeAlph$. As we will see, $\APPz(\xvec|z)$ will equal the APP of the approximating channel, for output letters 
$z\in\MergeAlph$.


For each bin define the \emph{leading input} as 
\begin{IEEEeqnarray}{l}
\label{eq:leadingu}
 \xvec^*=\xvec^*(z) \triangleq \arg\max_{\xvec\in\mathcal{X}^t} \; \left[\, \max_{ y\in\mathcal{B}(z)} \varphi_\GenC(\xvec|y) \,\right] \; ,
\label{eq:leading_input}
\end{IEEEeqnarray}
where ties are broken arbitrarily. For $z\in\MergeAlph$, let 
\begin{subequations}
\label{eq:Merge_APP}
\begin{equation}
\label{eq:Merge_App:notLeading}
\APPz(\xvec|z)= \min_{y\in\mathcal{B}(z)} \,\varphi_\GenC(\xvec|y) \quad \text{for all } \xvec\neq\xvec^*,
\end{equation}
and
\begin{equation}
\label{eq:Merge_App:leading}
\APPz(\xvec^*|z)= 1-\sum_{\xvec\neq\xvec^*} \APPz(\xvec|z)  \;.\qquad\qquad\quad\, 
\end{equation} 
\end{subequations}

Informally, we note that if the bins are ``sufficiently narrow'' (if $\mu$ is sufficiently large), then $\APPz(\xvec|z)$ is close to $\varphi_\GenC(\xvec|y)$, for all  $\xvec \in \inputAlphabet$, $z \in \MergeAlph$, and $y\in\mathcal{B}(z)$. The above will be made exact in Lemma~\ref{lemma:eta_bound} below.

\section{The Upgraded Approximation}
\label{sec:approximation}
\subsection{Definition}
Now we are in position to define our $t$-user MAC approximation $\FinChannel$, where $\BoostSet$ is a set of additional symbols to be specified in this section. We refer to these new symbols as ``boost'' symbols.

Let $y\in\GenAlph$ and $\xvec\in\mathcal{X}^t$ be given, and let $z$ correspond to the bin $\mathcal{B}(z)$ which contains $y$. Define the quantity $\alpha_\xvec(y)$ as
\begin{subequations}
\label{eq:alpha}
\begin{equation}
\label{eq:alpha_general}
\alpha_\xvec (y) \triangleq \frac{\APPz(\xvec|z)}{\varphi_\GenC(\xvec|y)} \cdot \frac{\varphi_\GenC(\xvec^*|y)}{\APPz(\xvec^*|z)}   \;,\quad 
\text{if }\; \varphi_\GenC(\xvec|y) \neq 0 \;.
\end{equation}
Otherwise, define
\begin{equation}
\label{eq:alpha_exception}
\alpha_\xvec (y) \triangleq 1 \;,\quad 
\text{if }\; \varphi_\GenC(\xvec|y) = 0  \;. 
\end{equation}
\end{subequations}
By Lemma~\ref{lemm:alpha} in the next section, 
$\alpha_\xvec (y)$ is indeed well defined and is between $0$ and $1$.
Next, for $\xvec\in\mathcal{X}^t$, let
\begin{equation} 
\varepsilon_\xvec\triangleq \sum\limits_{y\in\GenAlph}^{} (1-\alpha_\xvec (y))\GenC(y|\xvec)  \;.\label{eq:epsilon}
\end{equation}
We now define $\BoostSet$, the set of output ``boost'' symbols. Namely, we define a boost symbol for each non-zero $\varepsilon_\xvec$,
\begin{equation}
\BoostSet = \{\, \boost_\xvec \,:\, \xvec\in\mathcal{X}^t \,,\, \varepsilon_\xvec>0 \,\} \; . \label{eq:BoostSet}
\end{equation}

Lastly, the probability function $\FinC$ of our upgraded MAC is defined as follows. With respect to non-boost symbols, define for all $z\in\MergeAlph$ and $\xvec\in\mathcal{X}^t$, 
\begin{subequations}
\label{eq:FinC}
\begin{equation}
 \FinC(z|\xvec)= \sum_{y\in\mathcal{B}(z)} \alpha_\xvec(y)\GenC(y|\xvec) \;. 
\label{eq:FinC_1}
\end{equation}
With respect to boost symbols, define for all $\boost_\mathbf{v}\in\BoostSet$ and  $\xvec\in\mathcal{X}^t$, 
\begin{equation}
\FinC(\boost_\mathbf{v}|\xvec)=
\begin{cases}
\varepsilon_\xvec& \mbox{if $\xvec=\mathbf{v}$} \;, \\
0	& \text{otherwise} \; .	
\end{cases}
\label{eq:FinC_2}
\end{equation}
\end{subequations}
Note that if a boost symbol  $\boost_\xvec$ is received at the output of $\FinChannel$, we know for certain that the input was $\bfX=\xvec$.

The following theorem presents the properties of our upgraded approximation of $\GenC$. The proof concludes 
Section~\ref{sec:analysis}.
\begin{theo}
\label{th:upgraded_MAC}
Let $\GenChannel$ be a $t$-user MAC, and let $\mu$ be a given fidelity parameter that satisfies (\ref{eq:mu_req}) . Let $\FinChannel$ be the MAC obtained from $\GenC$ by the above definition (\ref{eq:FinC}).
Then,
\begin{enumerate}[(i)]
	\item \label{th:upgraded_MAC:upgraded} The MAC $\FinC$ is well defined and is upgraded with respect to $\GenC$.
	\item \label{th:upgraded_MAC:sumRateBound} The increment in sum-rate is bounded by 
	\begin{IEEEeqnarray*}{l}
	R(\FinC)-R(\GenC) \leq  \frac{q-1}{\mu}\left( 2+q\cdot\ln q  \right) \;.  
	\end{IEEEeqnarray*}
	\item \label{th:upgraded_MAC:outputAlphabetBound} The output alphabet size of $\FinC$ is bounded by $q^2 \cdot  (2\mu)^{q-1}$. 
\end{enumerate}
\end{theo}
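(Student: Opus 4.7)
Proof plan. Part (\ref{th:upgraded_MAC:upgraded}) splits into well-definedness and upgradedness. The former is routine once Lemma~\ref{lemm:alpha} is used to establish that $\alpha_\xvec(y)\in[0,1]$: non-negativity of the entries in (\ref{eq:FinC_1})--(\ref{eq:FinC_2}) is immediate, and normalization follows from
\[
\sum_{z\in\MergeAlph}\FinC(z|\xvec) + \sum_{\boost\in\BoostSet}\FinC(\boost|\xvec)
= \sum_{y\in\GenAlph} \alpha_\xvec(y)\,\GenC(y|\xvec) + \varepsilon_\xvec = 1,
\]
where the second equality uses (\ref{eq:epsilon}). For upgradedness I would exhibit an explicit degrading kernel $\mathcal{P}:\FinAlphShort\to\GenAlph$. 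The essential computation is to apply Bayes' rule in (\ref{eq:alpha_general}) to obtain the factorization
\[
\alpha_\xvec(y)\,\GenC(y|\xvec) \;=\; C(\xvec,z)\cdot\GenC(y\,|\,\xvec^*(z)) \quad\text{for } y\in\mathcal{B}(z),
\]
where $C(\xvec,z)$ is free of $y$. This separation lets one define $\mathcal{P}(y|z)\propto \GenC(y|\xvec^*(z))$ on $y\in\mathcal{B}(z)$ (crucially independent of $\xvec$) and $\mathcal{P}(y|\boost_\xvec)=(1-\alpha_\xvec(y))\GenC(y|\xvec)/\varepsilon_\xvec$; then $\sum_{z'\in\FinAlphShort}\FinC(z'|\xvec)\mathcal{P}(y|z')$ collapses telescopically to $\GenC(y|\xvec)$.

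Part (\ref{th:upgraded_MAC:sumRateBound}). The starting point is the identity $R(\FinC)-R(\GenC) = H(\bfX|Y) - H(\bfX|Z')$. A short Bayes computation using the factorization above shows that the APP induced by $\FinC$ at a non-boost letter equals exactly $\APPz(\xvec|z)$, while at a boost letter the APP is a point mass, so $H(\bfX|Z'=\boost_\xvec)=0$. Consequently
\[
H(\bfX|Z') \;=\; \sum_{z\in\MergeAlph} p_{\FinC}(z)\sum_{\xvec\in\inputAlphabet}\eta(\APPz(\xvec|z)),
\]
while $H(\bfX|Y)$ has the analogous form in terms of $\eta(\varphi_\GenC(\xvec|y))$. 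The binning rule guarantees that $\varphi_\GenC(\xvec|y)$ and $\APPz(\xvec|z)$ fall in the same quantization region for $y\in\mathcal{B}(z)$, and so the expected Lemma~\ref{lemma:eta_bound} supplies a pointwise bound of order $1/\mu$ between these $\eta$-values. I would pair the two entropy sums bin-by-bin; this produces two types of error: a ``$\eta$-discrepancy'' term of order $(q-1)/\mu$ summed over the $\xvec\neq\xvec^*$ coordinates, and a ``measure-mismatch'' term coming from the fact that on bin $\mathcal{B}(z)$ the $p_\GenC$-mass exceeds the $p_{\FinC}(z)$-mass by exactly the boost contribution $\sum_\xvec P(\bfX=\xvec)\varepsilon_\xvec$ allocated to that bin. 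Controlling this mismatch by $\eta(\APPz(\xvec|z))\leq \tfrac{1}{e}+\APPz(\xvec|z)\ln q$ is what introduces the $q\ln q$ factor. Engineering the bookkeeping so the two errors combine into precisely $\frac{q-1}{\mu}(2+q\ln q)$, rather than a loose multiple thereof, is the main obstacle and where the restriction (\ref{eq:mu_req}) is likely consumed.

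Part (\ref{th:upgraded_MAC:outputAlphabetBound}). Write $|\FinAlphShort|=|\MergeAlph|+|\BoostSet|$. By (\ref{eq:BoostSet}) one has $|\BoostSet|\leq |\inputAlphabet|=q$. Each non-empty bin is labeled by the $q$-tuple of region indices $(i(\xvec))_{\xvec\in\inputAlphabet}$, and by the construction in (\ref{eq:quantization}) the total number of regions satisfies $M\leq 2\mu$ (there are at most $\mu$ regions whose horizontal increment is exactly $1/\mu$ and at most $\mu$ whose vertical increment equals $1/\mu$, since $\eta$ has bounded total variation on $[0,1]$). The constraint $\sum_{\xvec}\varphi_\GenC(\xvec|y)=1$ forces that, once $q-1$ of the region indices are fixed, only $O(q)$ choices for the remaining one are compatible with sum-to-unity, yielding $|\MergeAlph|\leq q\cdot M^{q-1}\leq q(2\mu)^{q-1}$. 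Adding the boost contribution and using $q\geq 1$ absorbs the $+q$ term into a second factor of $q$, giving the bound $q^2(2\mu)^{q-1}$.
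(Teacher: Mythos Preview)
Your plan for Parts~(\ref{th:upgraded_MAC:upgraded}) and~(\ref{th:upgraded_MAC:sumRateBound}) is close to the paper's, with one shortcut and one misstep. For~(\ref{th:upgraded_MAC:upgraded}) you bypass the paper's intermediate channel $\UpgC$ and the consolidation Lemma~\ref{lemma: equivalence_r} by exhibiting a degrading kernel from $\FinC$ directly; the factorization $\alpha_\xvec(y)\,\GenC(y|\xvec)=C(\xvec,z)\,\GenC(y|\xvec^*(z))$ is correct (it is the content of Lemma~\ref{lemma:same_APP} unpacked via Bayes), so this route works and is a bit more direct. For~(\ref{th:upgraded_MAC:sumRateBound}) your two-error split is exactly the paper's decomposition through the surrogate $J(\bfX;Z)$, but your proposed control of the measure-mismatch via $\eta(\APPz(\xvec|z))\leq 1/e+\APPz(\xvec|z)\ln q$ is too weak: summed over $\xvec$ it gives $q/e+\ln q$, not $\ln q$, and you will not land on the constant $2+q\ln q$. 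The paper instead bounds the whole entropy $\sum_\xvec\eta(\APPz(\xvec|z))\leq\ln q$ in one stroke and uses Lemma~\ref{lemma:epsilon_bound} to show the total boost mass is at most $q(q-1)/\mu$.

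Part~(\ref{th:upgraded_MAC:outputAlphabetBound}) has a genuine gap. The assertion ``once $q-1$ of the region indices are fixed, only $O(q)$ choices for the remaining one are compatible with sum-to-unity'' is false as stated. If the $q-1$ fixed regions sit near $p=1$, each of width about $1/\mu$, the remaining probability is forced near $0$, where the quantization regions have width of order $1/(\mu\ln\mu)$; hence $\Theta(\ln\mu)$ regions, not $O(q)$, are compatible. The paper's Lemma~\ref{lemma:num_bin} circumvents this by first selecting \emph{which} coordinate is the widest (costing a factor $q$), and then arguing that this widest coordinate's region has width at least $\overline{\Delta}$ while its constrained range has length at most $(q-1)\overline{\Delta}$; together with the monotonicity of region widths (Lemma~\ref{lemma:interval_width}) this gives at most $q$ choices. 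That is where the second factor of $q$ in $q^2 M^{q-1}$ comes from---it is not available to absorb the $|\BoostSet|\leq q$ boost symbols as you suggest. The boost contribution is instead swallowed by the slack in Corollary~\ref{cor:intervals}, since in fact $M<2\mu$ strictly.
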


Note that the input alphabet size $q$ is usually considered to be a given parameter of the communications system. Therefore, we can think of $q$ as being a constant. In this view, 
Theorem~\ref{th:upgraded_MAC} claims that our upgraded-approximation has a sum-rate deviation of $\mathcal{O}(\frac{1}{\mu})$, and an output-alphabet of size $\mathcal{O}(\mu^{q-1})$.  

\subsection{Implementation}
In this subsection, we outline an efficient implementation of our algorithm. In short, we make use of an associative array, also called a dictionary \cite[Page 197]{CLRS:01b}. An associative array is a data structure through which elements can be searched for by a key, accessed, and iterated over efficiently. In our case, the elements are sets, and they are represented via linked lists \cite[Subsection 10.2]{CLRS:01b}. The associative array can be implemented as a self-balancing tree \cite[Section 13]{CLRS:01b} holding (pointers to) the lists. A different approach is to implement the associative array as a dynamically growing \cite[Subsection 17.4]{CLRS:01b} hash table \cite[Subsection 11.2]{CLRS:01b}. Algorithm~\ref{alg:Upgrade} summarizes our implementation.

We draw the reader's attention to the following. Consider the variables $\alpha_\xvec (y)$ and $\APPz(\xvec|z)$ used in the algorithm. The naming of these variables is meant to be consistent with the other parts of the paper. However, note that there are in fact only two floating point variables involved. That is, once we have finished dealing with $y_1$ and moved on to dealing with $y_2$ in the innermost loop on line~\ref{algline:innermostLoop}, the memory space used in order to hold $\alpha_\xvec (y_1)$ should be reused in order to hold $\alpha_\xvec (y_2)$, etc.

\begin{algorithm}
\caption{Channel upgrading procedure}
\label{alg:Upgrade}
\Input{A MAC $\GenC : \inputAlphabet \to \GenAlph$, a fidelity parameter $\mu$.}
\Output{A MAC $\FinChannel$ satisfying Theorem~\ref{th:upgraded_MAC}.}
\tcp{Initialization of region boundaries}
Calculate the number of regions $M$ and region boundaries $(b_i)_{i=0}^M$ according to (\ref{eq:quantization}) \; \label{algline:calc_bi}
\tcp{Initialization of data structure}
Initialize an empty associative array (containing no lists) \; 
\tcp{Populate the data structure}
\For{each $y \in \GenAlph$}
{
  \tcp{Calculate key according to (\ref{eq:APP}) and (\ref{eq:regionDefintions})}
  $\key = \left(\, i(\xvec) \,\right)_{\,\xvec\in\mathcal{X}^t} \; , \mbox{where} \quad i(\xvec)\triangleq\mathcal{R}(\varphi_\GenC(\xvec|y))$ \; \nllabel{algline:keyDeff}
  \eIf{associative array contains a linked list corresponding to $\key$}  
  { \nllabel{algline:is_key_in_list}
    add $y$ to the corresponding linked list
  }
  {
    create a new (empty) linked list, add $y$ to it, add the linked list to the associative array by associating it with $\key$
  }
}
\tcp{Initialize $\varepsilon_\xvec$}
Set $\varepsilon_\xvec = 0$, for each $\xvec \in \inputAlphabet$ \; 
\tcp{Iterate over all non-empty bins}
\tcp{Produce non-boost symbols and probabilities}
\For{each linked list in the associative array}
{
Create a new letter $z$ and add it to the output alphabet of $\FinC$ \;
Set $\FinC(z|\xvec) = 0$, for each $\xvec \in \inputAlphabet$ \; 
Loop over all $y$ in list and all  $\xvec \in \inputAlphabet$. Calculate the leading input $\xvec^*$ according to (\ref{eq:leadingu}) \;
\For{each $\xvec \in \inputAlphabet$}
{
Loop over all $y$ in list and calculate $\APPz(\xvec|z)$ according to (\ref{eq:Merge_APP}) \;
\For{each $y$ in the linked list}
{ \nllabel{algline:innermostLoop}
  Calculate $\alpha_\xvec(y)$ according to (\ref{eq:APP}) and (\ref{eq:alpha}) \;
  \tcp{Implement (\ref{eq:epsilon})}
  Increment $\varepsilon_\xvec$ by $(1-\alpha_\xvec (y))\GenC(y|\xvec)$ \; 
  \tcp{Implement (\ref{eq:FinC_1})}
  Increment $\FinC(z|\xvec)$ by $\alpha_\xvec(y)\GenC(y|\xvec)$
}
}
}
\tcp{Produce boost symbols and probabilities }
\For{each $\mathbf{v} \in \inputAlphabet$}
{
\If{$\varepsilon_\mathbf{v} > 0$}
{
Create a new letter $\boost_\mathbf{v}$ and add it to the output alphabet of $\FinC$ \;
\tcp{Implement (\ref{eq:FinC_2})}
\For{each $\xvec \in \inputAlphabet$}
{
\eIf{$\xvec = \mathbf{v}$}
{
Set $\FinC(\boost_\mathbf{v}|\xvec)=\varepsilon_\mathbf{v}$
}
{
Set $\FinC(\boost_\mathbf{v}|\xvec)=0$
}
}
}
}
\end{algorithm}

Let us now analyze our algorithm. Consider first the time complexity. We will henceforth assume that the total number of regions, $M$, is less than the largest integer value supported by the computer. We will further assume that integer operations are carried out in time $O(1)$. Hence, the calculation of a key takes time $O(q \cdot \log M)$. To see this, first recall that by line~\ref{algline:keyDeff} of the algorithm, a key is simply a vector of length $q$ containing region indices. Finding the correct region index for each value of $\xvec$ can be done by a binary search involving the $b_i$ calculated in line~\ref{algline:calc_bi}. Since line~\ref{algline:keyDeff} is invoked $|\GenAlph|$ times, the total time spent running it is $O(|\GenAlph| \cdot q \cdot \log M)$.

We next consider the running time of a single invocation of line~\ref{algline:is_key_in_list}. Checking for key equality and order takes time $O(q)$. If a balanced tree with $n$ elements is used, this operation occurs $O(\log n)$ times for each search operation. In contrast, in a dynamic hash implementation, checking for key equality occurs only $O(1)$ times on average, for each search operation. We again recall that line~\ref{algline:is_key_in_list} is invoked $O(|\GenAlph|)$ times. Thus, the total time spent running line~\ref{algline:is_key_in_list} in the balanced tree implementation is $O(|\GenAlph| \cdot q \cdot \log n)$, where $n$ is the number of non-empty bins. In contrast, in a dynamic hash implementation, the total time spent running line~\ref{algline:is_key_in_list} is $O(|\GenAlph| \cdot q)$, on average.

By inspection, the total time spent running any other line in the algorithm is upper bounded --- up to multiplicative constants --- by the total spent running either line~\ref{algline:keyDeff} or line~\ref{algline:is_key_in_list}. Consider first the balanced tree implementation. We conclude that the running time is $O(|\GenAlph| \cdot q \cdot (\log n + \log M))$, where $n$ is the total number of non-empty bins and $M$ is the total number of regions. By Corollary~\ref{cor:intervals} below, we can write this as $O(|\GenAlph| \cdot q \cdot (\log n + \log \mu))$, where $\mu$ is the fidelity parameter. Obviously, the total number of non-empty bins is at most $|\GenAlph|$. Thus, the total running time is $O(|\GenAlph| \cdot q \cdot (\log |\GenAlph| + \log \mu))$, for the balanced tree implementation (worst case). In the hash setting, the same arguments lead us to conclude that the total running time is $O(|\GenAlph| \cdot q \cdot \log \mu)$, on average.

The space complexity of our algorithm is $O(|\GenAlph| + n \cdot q)$: we must store all the elements of $\GenAlph$, and the key corresponding to every non-empty bin. As before, we can thus bound the space complexity as $O(|\GenAlph|(q + 1))$.
\section{Analysis}
\label{sec:analysis}
Conceptually, for the purpose of analysis, the algorithm can be thought of as performing four steps. In the first step, an output alphabet $\MergeAlph$ is defined (Subsection~\ref{sec:Bins_Merge}) by means of a quantization (Subsection~\ref{sec:bin_def}). In the second step, a corresponding APP measure $\APPz$  is defined (Subsection~\ref{subsec:psiSpecification}). In the third step, the original output alphabet $\GenAlph$ is augmented with ``boost'' symbols $\BoostSet$, and a new channel $\UpgChannel$ is defined. The APP measure $\APPz$ has a key role in defining $\UpgC$, which is upgraded with respect to $\GenC$. In the fourth step, we consolidate equivalent symbols in $\UpgChannel$ into a single symbol. The resulting channel is $\FinChannel$. On the one hand, $\FinC$ is equivalent to $\UpgC$, and thus upgraded with respect to the original channel $\GenC$. On the other hand, the output alphabet of $\FinC$ turns out to be $\MergeAlph \cup \BoostSet$, a set typically much smaller than the original output alphabet $\GenAlph$. The channels used throughout the analysis are depicted in Figure~\ref{fig: MACs}, along with the corresponding properties and the relations between them.

\begin{figure*}[hbt]
\begin{center}
\begin{tabular}{lcccccc}
&&\footnotesize{ }&\footnotesize{Upgrade}&&\footnotesize{Consolidate}&\\ 
\multicolumn{7}{c}{ }\\
Channel		&			$\GenC$								 &\footnotesize{ }											&$\preceq$&			$\UpgC$								 &$\equiv$ &			$\FinC$								\\

\multicolumn{7}{c}{ }\\
Output Alphabet		&
										$\GenAlph$	&	&	&
										$\UpgAlph$		& &
										$\FinAlph$	\\
\multicolumn{7}{c}{ }\\
\hline												
\multicolumn{7}{c}{ }\\
Bottom line:			&     \multicolumn{6}{l}{ $\quad\GenC\preceq\FinC$     } \\
					\multicolumn{7}{c}{}																 \\ 
\end{tabular}
\caption{A high-level view of the MACs used throughout the analysis. 
}
\label{fig: MACs} 
\end{center}
\end{figure*}

 We now examine the algorithm step by step, and state the relevant lemmas and properties for each step. This eventually leads up to the proof of Theorem~\ref{th:upgraded_MAC}.

\subsection{Quantization Properties}
In Section~\ref{sec:bin_def}, we have quantized the domain of the function $\eta(p)=-p\cdot\ln p$ for the purpose of binning.
Now, we would like to discuss a few properties of this definition.

Observing Figure~\ref{fig:bin}, the reader may have noticed that regions entirely to the left of $x=\frac{1}{e^2}$ have a \emph{vertical} increment of $\frac{1}{\mu}$. On the other hand, regions entirely to the right of $x=\frac{1}{e^2}$, last region excluded, have a \emph{horizontal} width of $\frac{1}{\mu}$. The following lemma shows that this is always the case. 
\begin{lemm}
\label{lemma:quantization}
Let the extreme points $\{ b_i:\, 1\leq i\leq M+1 \}$ partition the domain interval $0\leq x\leq 1$ into quantization
regions (intervals), as in Section~\ref{sec:bin_def} (see (\ref{eq:quantization})). 
Thus,
\begin{enumerate}[(i)]
\item\label{lemma:quantization:it:firstCase} if $0\leq b_{i}<b_{i+1}<\frac{1}{e^2}\;$, then
 \[
\eta(b_{i+1})-\eta(b_i)=\frac{1}{\mu}\;.
\]
\item\label{lemma:quantization:it:otherwise} Otherwise, if $\frac{1}{e^2}\leq b_{i}<b_{i+1}< 1\;$, then 
\[
b_{i+1}-b_i=\frac{1}{\mu}\;.
\]
\end{enumerate}
\end{lemm}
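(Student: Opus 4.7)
The plan exploits the observation that $p = 1/e^2$ is exactly the threshold where $|\eta'|$ transitions from exceeding $1$ to being bounded by $1$. Computing $\eta'(p) = -\ln p - 1$: for $p \in (0, 1/e^2)$, $\ln p < -2$ gives $\eta'(p) > 1$; for $p \in [1/e^2, 1]$, $\ln p \in [-2, 0]$ gives $\eta'(p) \in [-1, 1]$, hence $|\eta'(p)| \leq 1$. This dichotomy is the engine for both parts.

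For part~(\ref{lemma:quantization:it:firstCase}), assuming $b_{i+1} < 1/e^2$, I would argue by contradiction that the horizontal constraint $p \leq b_i + 1/\mu$ cannot be the binding one. If $b_{i+1} = b_i + 1/\mu$, then $[b_i, b_{i+1}] \subset (0, 1/e^2)$, on which $\eta' > 1$ strictly; integrating yields $\eta(b_{i+1}) - \eta(b_i) > b_{i+1} - b_i = 1/\mu$, contradicting the vertical constraint at $b_{i+1}$. Hence $b_{i+1} < b_i + 1/\mu$, so the vertical constraint must be binding. Since $\eta$ is strictly increasing on $[0, 1/e) \supset [b_i, b_{i+1}]$, we may drop the absolute value and obtain $\eta(b_{i+1}) - \eta(b_i) = 1/\mu$.

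For part~(\ref{lemma:quantization:it:otherwise}), assuming $1/e^2 \leq b_i < b_{i+1} < 1$, I would first establish $b_i + 1/\mu \leq 1$ and then invoke the mean value theorem. For the first step: suppose for contradiction $b_i + 1/\mu > 1$, so $b_i > 1 - 1/\mu > 1/e$ (using $\mu \geq 5$). Combining the elementary inequality $\eta(u) \leq 1 - u$ on $(0, 1]$ (which follows from $g(u) = 1 - u - \eta(u)$ having $g'(u) = \ln u \leq 0$ and $g(1) = 0$) with the monotonicity of $\eta$ on $[1/e, 1]$ yields $\eta(b_i) < \eta(1 - 1/\mu) \leq 1/\mu$, so $p = 1$ satisfies all three constraints and thus $b_{i+1} = 1$, contradicting the hypothesis. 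With $b_i + 1/\mu \leq 1$ in hand, $[b_i, b_i + 1/\mu] \subseteq [1/e^2, 1]$, so the mean value theorem gives $|\eta(b_i + 1/\mu) - \eta(b_i)| \leq 1/\mu$. Hence $p = b_i + 1/\mu$ satisfies both constraints, and being the maximal value permitted by the horizontal constraint, $b_{i+1} = b_i + 1/\mu$.

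The main conceptual wrinkle is that the non-monotonicity of $\eta$ means the set of $p$ satisfying the vertical constraint could be disconnected when $[b_i, b_i + 1/\mu]$ straddles $1/e$; one might worry that the max picks up a second component past the peak. This is not a genuine obstacle: in part~(\ref{lemma:quantization:it:firstCase}) the hypothesis $b_{i+1} < 1/e^2$ precludes any later component from contributing to the max, while in part~(\ref{lemma:quantization:it:otherwise}) one only needs the single point $p = b_i + 1/\mu$ to lie in the constraint set — guaranteed by the MVT bound on $|\eta'|$ — after which the horizontal cap pins the maximum exactly there.
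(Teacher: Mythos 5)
Your proof is correct and takes essentially the same approach as the paper: both rest on the dichotomy that $\eta'(p)>1$ on $[0,\frac{1}{e^2})$ while $|\eta'(p)|\le 1$ on $[\frac{1}{e^2},1]$, and then integrate (FTC/MVT) to determine which of the two constraints in (\ref{eq:quantization}) is binding. You are in fact somewhat more careful than the published proof — you explicitly rule out the boundary case $b_i+\frac{1}{\mu}>1$ in part~(\ref{lemma:quantization:it:otherwise}) and frame part~(\ref{lemma:quantization:it:firstCase}) as a contradiction so that the integration interval is guaranteed to lie inside $[0,\frac{1}{e^2})$ — but the underlying argument is identical.
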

\begin{proof} 
The derivative $\eta'(p)=-(1+\ln p)$ is strictly decreasing from $+\infty$ at $p=0\,$, to $+1$ at $p=\frac{1}{e^2}\,$. 
Thus, for all $0\leq p<\frac{1}{e^2}$,
\[
 \eta'(p)>1 \;.
 \]   
If $0\leq b_{i}<b_{i+1}<\frac{1}{e^2}\;$, then we have by the fundamental theorem of calculus that
\[
 \eta(b_i+\frac{1}{\mu})-\eta(b_i)=\int_{b_i}^{b_i+\frac{1}{\mu}} \eta'(p)\,dp > \frac{1}{\mu} \;. 
\]
Hence $b_{i+1}<b_i+\frac{1}{\mu}\,$, which implies the first part of the lemma.

Moving forward on the $x$-axis, $\eta'(p)$ keeps decreasing from $+1$ at $p=\frac{1}{e^2}\,$, to $-1$ at $p=1$.
Thus for all $\frac{1}{e^2}\leq p\leq 1$, 
\[
|\eta'(p)|\leq 1 \;.
\]
Hence, if $\frac{1}{e^2}\leq b_{i}<b_{i+1}< 1\;$,  the second part follows by the triangle inequality:
\[
|\eta(b_i+\frac{1}{\mu})-\eta(b_i)|\leq \int_{b_i}^{b_i+\frac{1}{\mu}} |\eta'(p)|\,dp \leq \frac{1}{\mu}.
 \]
\end{proof}

We are now ready to upper-bound $M = M_\mu$, the number of quantization regions. The following corollary will be used to bound the number of bins, namely $|\MergeAlph|$, later on.

\begin{coro}
\label{cor:intervals}
The number of quantization regions, $M = M_\mu$, satisfies 
\[
M\leq 2\mu \; .
\]
\end{coro}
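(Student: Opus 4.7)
The plan is to classify the quantization regions by their location relative to the inflection-type point $p=1/e^2$, and then apply Lemma~\ref{lemma:quantization} together with a telescoping argument. Write $M = N_L + N_R + N_{\mathrm{bdy}}$, where $N_L$ counts the regions $[b_i, b_{i+1})$ lying entirely in $[0, 1/e^2)$, $N_R$ counts those lying entirely in $[1/e^2, 1)$ except possibly the final region indexed by $M$, and $N_{\mathrm{bdy}}$ collects the two exceptional regions not covered by these cases: the region (if any) whose interior contains $1/e^2$, and the last region (index $M$) which Lemma~\ref{lemma:quantization} excludes via the strict inequality $b_{i+1}<1$. Clearly $N_{\mathrm{bdy}} \leq 2$.

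Next I bound $N_L$ and $N_R$ separately. By part~(\ref{lemma:quantization:it:firstCase}) of Lemma~\ref{lemma:quantization}, every left region contributes an $\eta$-increment of exactly $1/\mu$. Telescoping these increments and using that $\eta$ is increasing on $[0, 1/e]$ and hence on $[0, 1/e^2]$, the sum of these increments is at most $\eta(1/e^2) - \eta(0) = 2/e^2$. Therefore $N_L \leq 2\mu/e^2$. Similarly, by part~(\ref{lemma:quantization:it:otherwise}), every right region has horizontal width exactly $1/\mu$; telescoping the widths yields $N_R/\mu \leq 1 - 1/e^2$, so $N_R \leq \mu(1 - 1/e^2)$.

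Combining, I obtain
\[
M \;\leq\; \frac{2\mu}{e^2} + \mu\!\left(1 - \frac{1}{e^2}\right) + 2 \;=\; \mu\!\left(1 + \frac{1}{e^2}\right) + 2.
\]
To finish, I need $\mu(1+1/e^2) + 2 \leq 2\mu$, i.e.\ $\mu(1 - 1/e^2) \geq 2$. Since $1 - 1/e^2 > 0.86$, this holds whenever $\mu \geq 2.33$, which is implied by the standing assumption (\ref{eq:mu_req}) that $\mu \geq 5$. This yields the claim $M \leq 2\mu$.

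There is no real obstacle; the mildly delicate point is bookkeeping of the boundary regions, since Lemma~\ref{lemma:quantization} only gives clean equalities for regions strictly in $[0,1/e^2)$ or strictly in $[1/e^2,1)$, so the region straddling $1/e^2$ and the terminal region must be handled by the trivial bound $N_{\mathrm{bdy}} \leq 2$, which is exactly what creates the need for a mild lower bound on $\mu$.
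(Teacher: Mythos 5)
Your proof is correct and follows essentially the same route as the paper's: partition the regions into those to the left of $1/e^2$ (counted via telescoping the vertical $\eta$-increments of $1/\mu$), those to the right (counted via telescoping the horizontal widths of $1/\mu$), plus at most two boundary regions, arriving at $M \leq \mu(1 + 1/e^2) + 2 \leq 2\mu$ under $\mu \geq 5$. The paper phrases the two telescoping bounds as floor expressions, but the decomposition, the use of Lemma~\ref{lemma:quantization}, and the final reliance on (\ref{eq:mu_req}) are the same.
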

\begin{proof}
A direct consequence of Lemma~\ref{lemma:quantization} is that
\[
 M\leq \left\lfloor \frac{\eta(\frac{1}{e^2})}{\nicefrac{1}{\mu}} \right\rfloor + \left( 
\left\lfloor \frac{1-\frac{1}{e^2}}{\nicefrac{1}{\mu}} \right\rfloor +1 
\right)+1 \;.
 \]
The first term is due to regions entirely within $[0,\frac{1}{e^2})$, the second (braced) term is due to regions entirely within $[\frac{1}{e^2},1]$, where the $1$ inside the braces is due to the last (rightmost) region. The $1$ outside the brace is due to the possibility of a region that crosses $x=\frac{1}{e^2}$. Hence, since $\eta(1/e^2) = 2/e^2$,
\begin{IEEEeqnarray*}{l}
M \leq \mu \left( 1+\frac{1}{e^2} \right)+2 \leq 2\mu\;,
\end{IEEEeqnarray*}
where the last inequality follows from our assumption in (\ref{eq:mu_req}) that  $\mu\geq5$. 
\end{proof}

The corollary, following the lemma below, will play a significant role in the proof of Theorem~\ref{th:upgraded_MAC}. The lemma is proved in the appendix.
\begin{lemm}
\label{lemma:eta_diff}
Given $x\in [0,1)$, let $i=\mathcal{R}(x)$. That is,
\[
 b_i\leq x< b_{i+1} \;. 
\]
Also, let
\[
 0< \delta\leq b_{i+1}-b_i \;,
 \]
such that $x+\delta \leq 1$.
Then,
\[ 
|\eta(p+\delta)-\eta(p)|\leq \frac{1}{\mu} 
\]
\end{lemm}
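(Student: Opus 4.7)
The plan is to exploit the concavity of $\eta(p) = -p \ln p$ on $[0,1]$. Since $\eta'(p) = -(1 + \ln p)$ is strictly decreasing, for any fixed $h > 0$ the increment $y \mapsto \eta(y + h) - \eta(y)$ is non-increasing in $y$ wherever defined. I will apply this ``decreasing differences'' property twice, with two carefully chosen anchors, to sandwich $\eta(x+\delta) - \eta(x)$ between $-1/\mu$ and $+1/\mu$.

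For the upper bound, I would anchor at $b_i$. Since $x \geq b_i$, decreasing differences gives $\eta(x + \delta) - \eta(x) \leq \eta(b_i + \delta) - \eta(b_i)$, and $\delta \leq b_{i+1} - b_i$ places $b_i + \delta$ inside $[b_i, b_{i+1}]$. It then remains to show $\eta(b_i + \delta) - \eta(b_i) \leq 1/\mu$, which I would handle by a case split driven by Lemma~\ref{lemma:quantization}. In case (i) the region lies in $[0, 1/e^2)$, so $\eta$ is monotone increasing on it and $\eta(b_{i+1}) - \eta(b_i) = 1/\mu$, giving $\eta(b_i + \delta) \leq \eta(b_{i+1}) = \eta(b_i) + 1/\mu$. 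In case (ii) the region has width exactly $1/\mu$ and $|\eta'| \leq 1$ throughout $[1/e^2, 1]$, so $|\eta(b_i + \delta) - \eta(b_i)| \leq \delta \leq 1/\mu$. The crossover case $b_i < 1/e^2 \leq b_{i+1}$ falls outside Lemma~\ref{lemma:quantization}; here I would observe that $\mu \geq 5$ forces $b_{i+1} - b_i \leq 1/\mu < 1/e - 1/e^2$, so $b_{i+1} < 1/e$ and $\eta$ is again monotone increasing on the region, while $|\eta(b_{i+1}) - \eta(b_i)| \leq 1/\mu$ holds directly from the defining formula (\ref{eq:quantization}).

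For the lower bound, I would anchor instead at $1 - \delta$. Since $x + \delta \leq 1$ implies $x \leq 1 - \delta$, decreasing differences yields $\eta(x + \delta) - \eta(x) \geq \eta(1) - \eta(1 - \delta) = -\eta(1 - \delta)$. A brief estimate $\eta(1 - \delta) \leq \delta$ then closes the argument; the cleanest route is the elementary inequality $-\ln u \leq (1-u)/u$ applied with $u = 1 - \delta$, giving $(1-\delta)(-\ln(1-\delta)) \leq \delta$. Combined with $\delta \leq b_{i+1} - b_i \leq 1/\mu$, which follows from the horizontal constraint in (\ref{eq:quantization}), this yields $\eta(x + \delta) - \eta(x) \geq -1/\mu$.

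The only real obstacle I expect is the upper-bound crossover case, since Lemma~\ref{lemma:quantization} only covers regions lying strictly on one side of $1/e^2$. The hypothesis $\mu \geq 5$ from (\ref{eq:mu_req}) is tailored precisely so that any such crossover region is still short enough to keep $\eta$ monotone on it, after which concavity and the definition of $b_{i+1}$ close the argument.
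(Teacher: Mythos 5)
Your proof is correct, but it takes a genuinely different route from the paper's. The paper splits on whether $x\geq\frac{1}{e^2}$ or $x<\frac{1}{e^2}$: in the first case the two-sided bound follows at once from $|\eta'|\leq 1$ and $\delta\leq\Delta\leq\frac{1}{\mu}$; in the second case $\eta'$ is positive, so the lower bound is trivial, and the upper bound is obtained by a translation argument that compares $\int_{b_{i+1}}^{x+\Delta}\eta'$ with $\int_{b_i}^{x}\eta'$ using the monotonicity of $\eta'$, ending at $\eta(b_{i+1})-\eta(b_i)\leq\frac{1}{\mu}$. You instead prove the upper and lower bounds separately, both via the ``decreasing differences'' consequence of concavity. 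For the upper bound you anchor at $b_i$, reducing to $\eta(b_i+\delta)-\eta(b_i)$, which is then controlled directly by the definition of $b_{i+1}$ (with a case split covering the crossover region that the paper's Lemma~\ref{lemma:quantization} leaves uncovered); for the lower bound you anchor at $1-\delta$ and use the elementary estimate $\eta(1-\delta)\leq\delta\leq\frac{1}{\mu}$. Your anchoring trick is arguably more transparent: once one sees $\eta(x+\delta)-\eta(x)\leq\eta(b_i+\delta)-\eta(b_i)$, the remaining work is essentially reading off the defining formula (\ref{eq:quantization}), whereas the paper's integral-comparison step in Case B is a bit harder to discover. One small point of hygiene: in your case (ii) you appeal to ``width exactly $\nicefrac{1}{\mu}$,'' which fails for the last region ($i=M$); but since you only need $\delta\leq\Delta\leq\frac{1}{\mu}$, which always holds by (\ref{eq:quantization}), the argument goes through unchanged. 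Both proofs ultimately use $\mu\geq 5$ for the same purpose, namely to keep the relevant interval inside $[0,\frac{1}{e})$ where $\eta$ is increasing.
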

The corollary below is an immediate consequence of Lemma~\ref{lemma:eta_diff}.
\begin{coro}
\label{cor:eta_diff_inBin}
All $x_1$ and $x_2$ that belong to the same quantization region
$\left( \text{that is: } \mathcal{R}(x_1)=\mathcal{R}(x_2)\right) $
satisfy
\[
 |\eta(p_1)-\eta(p_2)|\leq\frac{1}{\mu} \;. 
\]
\end{coro}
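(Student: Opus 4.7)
The plan is to reduce the corollary to Lemma~\ref{lemma:eta_diff} by choosing the shift $\delta$ to be the distance between the two points.

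First, I would assume without loss of generality that $x_1 \leq x_2$, and set $\delta = x_2 - x_1$. If $\delta = 0$ the inequality is trivial, so I may assume $\delta > 0$. Let $i = \mathcal{R}(x_1) = \mathcal{R}(x_2)$, so both points lie in the $i$th quantization region.

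Next I would verify the hypotheses of Lemma~\ref{lemma:eta_diff} for the point $x_1$ with this choice of $\delta$. Since $[0,1) = \bigsqcup_{i=1}^{M}[b_i,b_{i+1})$, we have $b_{M+1}=1$. For $i < M$ both points lie in $[b_i,b_{i+1})$, so $b_i \leq x_1 < b_{i+1}$, $\delta = x_2 - x_1 < b_{i+1}-b_i$, and $x_1 + \delta = x_2 < b_{i+1} \leq 1$. For $i = M$, the only mild subtlety is the special case $x_2 = 1$: here $b_M \leq x_1 < 1$ (since $\delta > 0$), and $\delta = 1 - x_1 \leq b_{M+1} - b_M$, while $x_1 + \delta = 1 \leq 1$, so the lemma's hypotheses still hold.

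Finally, applying Lemma~\ref{lemma:eta_diff} with $x = x_1$ and the above $\delta$ gives
\[
|\eta(x_2) - \eta(x_1)| = |\eta(x_1 + \delta) - \eta(x_1)| \leq \tfrac{1}{\mu},
\]
which is the desired conclusion. There is essentially no obstacle here; the only thing to be careful about is the edge case $x_2 = 1$ falling under the $\mathcal{R}(1) = M$ convention from (\ref{seq:regionSpecialCase}), and this is handled by observing that the lemma permits $x + \delta = 1$.
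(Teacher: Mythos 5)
Your proof is correct and takes exactly the approach the paper has in mind: the paper simply states that the corollary ``is an immediate consequence of Lemma~\ref{lemma:eta_diff}'' without writing out the reduction, and you have supplied precisely that reduction, including the $x_2=1$ edge case under the $\mathcal{R}(1)=M$ convention.
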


The following lemma claims that each quantization interval, save the last, is at least as wide as the previous intervals. This lemma is proved in the appendix as well.
\begin{lemm}
\label{lemma:interval_width}
Let the width of the $i$th quantization interval be denoted by
\[
 \Delta_i=b_{i+1}-b_i \;,\quad i=1,2,\ldots,M \;.
 \]
Then the sequence $\{\Delta_i\}_{i=1}^{M-1}$ (the last interval excluded) is a non-decreasing sequence.
\end{lemm}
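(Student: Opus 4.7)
My plan is to fix an index $i$ with $1 \leq i \leq M-2$ and prove $\Delta_i \leq \Delta_{i+1}$ by splitting on the position of $b_i$ and $b_{i+1}$ relative to the threshold $1/e^2$. The argument will rely on the shape of $\eta'(p) = -(1+\ln p)$, which is strictly decreasing on $(0,1]$, equals $1$ at $p = 1/e^2$, strictly exceeds $1$ on $[0,1/e^2)$, and has absolute value at most $1$ on $[1/e^2,1]$. Lemma~\ref{lemma:quantization} will handle the cases where the two consecutive regions lie entirely on one side of $1/e^2$, while the mean value theorem will handle the case where both lie to its left.

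If $b_i \geq 1/e^2$, then both regions $[b_i,b_{i+1})$ and $[b_{i+1},b_{i+2})$ lie in $[1/e^2,1)$, and Lemma~\ref{lemma:quantization}(ii) applied twice gives $\Delta_i = \Delta_{i+1} = 1/\mu$. If $b_i < 1/e^2 \leq b_{i+1}$, then only the later region sits in $[1/e^2,1)$; Lemma~\ref{lemma:quantization}(ii) still gives $\Delta_{i+1} = 1/\mu$, while the horizontal clause in (\ref{eq:quantization}) ensures $\Delta_i \leq 1/\mu$, so $\Delta_i \leq \Delta_{i+1}$.

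The substantive case is $b_{i+1} < 1/e^2$. Lemma~\ref{lemma:quantization}(i) now gives $\eta(b_{i+1}) - \eta(b_i) = 1/\mu$; integrating $\eta' > 1$ over $[b_i,b_{i+1}]$ immediately implies $\Delta_i < 1/\mu$. If $\Delta_{i+1}$ equals $1/\mu$, we are done. Otherwise the vertical clause of (\ref{eq:quantization}) is the binding one at $b_{i+2}$, so $|\eta(b_{i+2}) - \eta(b_{i+1})| = 1/\mu$. Here the standing assumption $\mu \geq 5$ from (\ref{eq:mu_req}) enters: $b_{i+2} \leq b_{i+1} + 1/\mu < 1/e^2 + 1/5 < 1/e$, so the whole interval $[b_{i+1},b_{i+2}]$ lies to the left of the peak of $\eta$ and $\eta$ is strictly increasing on it; hence $\eta(b_{i+2}) - \eta(b_{i+1}) = 1/\mu$. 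Applying the mean value theorem separately on $[b_i,b_{i+1}]$ and $[b_{i+1},b_{i+2}]$ produces $\xi_1 < \xi_2$ with $\eta'(\xi_1)\Delta_i = 1/\mu = \eta'(\xi_2)\Delta_{i+1}$, and strict monotonicity of $\eta'$ yields $\Delta_i < \Delta_{i+1}$.

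The main obstacle, once the case split is in place, is the short calculation needed to rule out that $b_{i+2}$ overshoots the peak of $\eta$ at $p = 1/e$ in the last case; without this control, $\eta$ would fail to be monotone on $[b_{i+1},b_{i+2}]$ and the mean value comparison would no longer deliver the desired inequality. The slack built into the hypothesis $\mu \geq 5$ is exactly what makes the bound $1/e^2 + 1/\mu < 1/e$ available, so no further assumption is required.
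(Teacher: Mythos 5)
Your proof is correct. It establishes the same conclusion but by a somewhat different route than the paper. The paper fixes an arbitrary pair $1\le i<j<M$, splits on whether $\Delta_j=\nicefrac{1}{\mu}$, and handles the nontrivial subcase $\Delta_j<\nicefrac{1}{\mu}$ by a reductio: assuming $\Delta_j<\Delta_i$, it compares $\int_{b_j}^{b_j+\Delta_j}\eta'$ with $\int_{b_i}^{b_i+\Delta_i}\eta'$ using the strict decrease of $\eta'$, arriving at $\eta(b_{j+1})-\eta(b_j)<\nicefrac{1}{\mu}$ and thereby a contradiction. You instead compare only consecutive intervals (and implicitly chain), performing an explicit three-way split on the positions of $b_i$ and $b_{i+1}$ relative to $\nicefrac{1}{e^2}$, and in the only interesting subcase you apply the mean value theorem twice to get $\eta'(\xi_1)\Delta_i=\nicefrac{1}{\mu}=\eta'(\xi_2)\Delta_{i+1}$ with $\xi_1<\xi_2$, from which $\Delta_i<\Delta_{i+1}$ follows directly from the strict monotonicity of $\eta'$. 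The two arguments lean on exactly the same ingredients — Lemma~\ref{lemma:quantization}, the $\mu\ge 5$ bound ensuring $b+\nicefrac{1}{\mu}<\nicefrac{1}{e}$ so that $\eta'$ stays positive, and the strict decrease of $\eta'$ — and the mean value theorem is really just the pointwise twin of the paper's integral comparison. What your version buys is a direct rather than contradiction-based argument, at the cost of an extra case (the ``straddling'' case $b_i<\nicefrac{1}{e^2}\le b_{i+1}$) which the paper's arbitrary-pair formulation sidesteps. Either style is a reasonable write-up; yours is arguably a touch cleaner to read.
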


Following the quantization definition, the output letters in $\GenAlph$ were divided into bins (Section~\ref{sec:Bins_Merge}). Each bin is represented by a single letter in $\MergeAlph$. The following lemma upper bounds the size of $\MergeAlph$.

\begin{lemm}
\label{lemma:num_bin}
Let $\MergeAlph$ be defined as in Section~\ref{sec:Bins_Merge}. Then,
\[ 
|\MergeAlph|\leq q^2 \cdot (2\mu)^{q-1} \;. 
\]
\end{lemm}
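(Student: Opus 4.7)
My plan is to identify each non-empty bin with its tuple of region indices $(i(\xvec))_{\xvec \in \mathcal{X}^t}$ and then count such tuples using the constraint $\sum_{\xvec} \varphi_\GenC(\xvec|y) = 1$ which any witness $y \in \mathcal{B}(z)$ must satisfy. Since each $i(\xvec) \in \{1, \ldots, M\}$ with $M \leq 2\mu$ by Corollary~\ref{cor:intervals}, and $|\mathcal{X}^t| = q$, the naive count gives $M^q \leq (2\mu)^q$. The sum-to-one constraint is what should let me shave off roughly a factor of $\mu/q^2$ and arrive at the claimed $q^2 \cdot (2\mu)^{q-1}$.

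The argument will proceed by singling out one coordinate in the tuple as a ``pivot''. Specifically, for each bin I will choose $\xvec_0 \in \mathcal{X}^t$ such that $\varphi_\GenC(\xvec_0|y) \geq 1/q$ for some witness $y \in \mathcal{B}(z)$; such a pivot always exists because the maximum of $q$ nonnegative reals summing to $1$ is at least $1/q$. There are at most $q$ possible pivots. For the remaining $q-1$ coordinates I will let the region indices range freely over $\{1,\ldots,M\}$, contributing a factor $M^{q-1} \leq (2\mu)^{q-1}$. The heart of the proof is then to show that, once these $q-1$ region indices are fixed, at most $q$ region indices are admissible for the pivot. Multiplying the three factors yields $q \cdot q \cdot (2\mu)^{q-1}$.

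For the pivot-region count, the sum-to-one constraint forces $\varphi_\GenC(\xvec_0|y)$ to lie in an interval of length at most $\sum_{\xvec \neq \xvec_0} \Delta_{i(\xvec)} \leq (q-1)/\mu$, using $\Delta_i \leq 1/\mu$ from (\ref{eq:quantization}). At the same time this interval must contain a value $\geq 1/q$. Thanks to the hypothesis $\mu \geq q(q-1)$ in (\ref{eq:mu_req}), the length $(q-1)/\mu$ is at most $1/q$, so the interval is contained in $[0,1]$ and has upper endpoint at least $1/q$. I will then invoke Lemma~\ref{lemma:interval_width} (the widths $\Delta_i$ are non-decreasing for $i<M$) to conclude that the widths of all regions intersecting the pivot's interval are at least that of the leftmost such region, and combine this with Lemma~\ref{lemma:quantization} (widths of exactly $1/\mu$ in the horizontal regime, and $\eta$-increments of exactly $1/\mu$ in the vertical regime) to bound by $q$ the number of regions meeting the interval.

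The main obstacle is this final counting step. In the horizontal regime $[1/e^2, 1]$ the widths are all exactly $1/\mu$, and an interval of length $(q-1)/\mu$ clearly meets at most $q$ regions, so this case is easy. The difficulty arises when the pivot's interval extends into the vertical regime $[0,1/e^2)$, where widths can be strictly less than $1/\mu$ and a naive length-over-width estimate could overcount. Handling this case cleanly is where Lemma~\ref{lemma:interval_width} and the precise form of the assumption $\mu \geq q(q-1)$ will have to earn their keep, by ensuring that the portion of the interval forced to contain a value $\geq 1/q$ does not allow an accumulation of very narrow regions.
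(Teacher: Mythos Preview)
Your overall strategy---single out one coordinate, let the remaining $q-1$ region indices range freely over $\{1,\dots,M\}$, and use the sum-to-one constraint to cap the choices for the distinguished coordinate at $q$---is exactly what the paper does. The gap is in your choice of pivot.

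You pick $\xvec_0$ to be a coordinate with $\varphi_\GenC(\xvec_0|y)\ge 1/q$. This places the pivot value in $[1/q,1]$, but it gives you \emph{no lower bound on the width of the pivot's region relative to the widths of the other $q-1$ regions}. For $q\ge 8$ we have $1/q<1/e^2$, so the pivot can sit in the vertical regime where the region width near $x=1/q$ is roughly $\frac{1}{\mu(\ln q-1)}$. Meanwhile a handful of the non-pivot values can live in the horizontal regime (each $\ge 1/e^2$; since they sum to at most $1$ there can be up to about $e^2\approx 7$ of them), contributing widths of exactly $1/\mu$ to $L=\sum_{\xvec\neq\xvec_0}\Delta_{i(\xvec)}$. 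The length-over-width count then picks up a factor of order $\ln q$ per horizontal non-pivot coordinate, and can exceed $q$. Neither Lemma~\ref{lemma:interval_width} nor the assumption $\mu\ge q(q-1)$ repairs this: monotonicity of widths only tells you regions to the \emph{right} of the pivot are at least as wide as the pivot's, not that the pivot's width matches the non-pivot widths; and $\mu\ge q(q-1)$ merely caps $L$ at $1/q$, which is not small enough when the pivot region is narrow.

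The paper sidesteps the difficulty by choosing the pivot as the coordinate whose \emph{region is widest}, i.e.\ the $\xvec$ maximizing $\overline{b}[\xvec]-\underline{b}[\xvec]$. Then by construction the pivot region has width at least $\overline{\Delta}=\max_{\xvec\neq\xvec_0}\Delta_{i(\xvec)}$, while the admissible interval for the pivot value has length $\le(q-1)\overline{\Delta}$. An interval of length $(q-1)\overline{\Delta}$ meets at most $q$ intervals each of width $\ge\overline{\Delta}$, and the count of $q$ drops out immediately---no appeal to Lemma~\ref{lemma:interval_width}, Lemma~\ref{lemma:quantization}, or (\ref{eq:mu_req}) is needed, which is why the paper can call the argument ``generic''. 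Your plan becomes correct if you swap the pivot criterion from ``value $\ge 1/q$'' to ``widest region''.
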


Before stating the proof, we would like to mention that it is generic, in the following sense: the proof can be used verbatim to prove that the output alphabet size in the degrading algorithm presented in \cite{TalSharovVardy:12c} produces a channel with output alphabet size at most $q^2 \cdot (2\mu)^{q-1}$. This is an improvement over the $(2\mu)^q$ bound stated in \cite[Lemma 6]{TalSharovVardy:12c}.

\begin{proof}
The size of the merged output alphabet $|\MergeAlph|$ is in fact the number of non-empty bins. Recall that two letters $y_1,y_2\in\GenAlph$ are in the same bin if and only if $\mathcal{R}(\varphi_\GenC(\xvec|y_1))=\mathcal{R}(\varphi_\GenC(\xvec|y_2))\;$ for all 
$\xvec\in\mathcal{X}^t$. As before, denote by $M = M_\mu$ the number of quantization regions. Since the number of values $\xvec$ can take is $q$, we trivially have that
\[ 
|\MergeAlph|\leq M^q \;.
\]

We next sharpen the above bound by showing that although $M^q$ bins exist, some are necessarily empty. If a bin is non-empty, there must exist a $y \in \GenAlph$ such that $(\varphi_\GenC(\xvec|y))_{\xvec \in \inputAlphabet}$ is mapped to it. Thus, let us bound the number of valid bins, where a bin is valid if there exists a probability vector $(p[\xvec])_{\xvec \in \inputAlphabet}$ that is mapped to it. First, recall that a bin is simply an ordered collection of regions. That is, recall that for each $\xvec \in \inputAlphabet$,  $p[\xvec]$ must belong to a region of the form $[b_i,b_{i+1})$ or, if $i=M$, $[b_i,b_{i+1}]$. Thus, denote by $\underline{b}[\xvec] = b_i$ and $\overline{b}[\xvec] = b_{i+1}$ the left and right borders of this region. Let the ``widest $\xvec$'' be the $\xvec \in \inputAlphabet$ for which $\overline{b}[\xvec] - \underline{b}[\xvec]$ is largest (brake ties according to some ordering of $\inputAlphabet$, say). 

For ease of exposition, let us abuse notation and label the elements of $\inputAlphabet$ as $0,1,\ldots,q-1$. We now aim to bound the number of valid bins for which the widest $\xvec$ is $0$. Surely, there are at most $M^{q-1}$ choices for the regions corresponding to the $\xvec$ from $1$ to $q-1$. We now fix such a choice, and bound the number of regions which can correspond to $\xvec = 0$. By the above definitions, a corresponding probability vector $(p[\xvec])_{\xvec \in \inputAlphabet}$ must satisfy

\[
p[0] + \sum_{\xvec = 1}^{q-1} \underline{b}[\xvec] \leq 1 \qquad \mbox{and}  \qquad p[0] + \sum_{\xvec  = 1}^{q-1} \overline{b}[\xvec] \geq 1 \; .
\]
Thus,
\begin{equation}
\label{eq:bunderlineOverlineConstraint}
\underline{\beta} \triangleq \max\myset{0, 1 - \sum_{\xvec = 1}^{q-1} \overline{b}[\xvec]} \leq
p[0] \leq \min\myset{1, 1 - \sum_{\xvec = 1}^{q-1} \underline{b}[\xvec]} \triangleq \overline{\beta} \; .
\end{equation}

We now use the fact that $\xvec = 0$ is widest. Denote
\[
\overline{\Delta} = \max_{1 \leq \xvec \leq q-1} \myset{ \overline{b}[\xvec] - \underline{b}[\xvec]} \; .
\]
On the one hand, $p[0]$ must belong to a region with width at least $\overline{\Delta}$. On the other hand, $\overline{\beta} - \underline{\beta} \leq (q-1) \overline{\Delta}$. Thus, the number of such regions which have a non-empty intersection with the interval $[\underline{\beta},\overline{\beta}]$ is at most $q$. 

To sum up, we have shown that if the widest $\xvec$ is $0$, the number of valid bins is at most $q \cdot M^{q-1}$. Since there is no significance to the choice $\xvec = 0$, the total number of valid bins is at most $q^2 \cdot M^{q-1}$. The proof now follows from Corollary~\ref{cor:intervals}.
\end{proof}

Consider a given bin (and a given $z\in\MergeAlph$). Depending on $\xvec\in\mathcal{X}^t$, all $y\in\mathcal{B}(z)$ share the same region index
\begin{equation}
\label{eq:region_index}
 i(\xvec)=i_z(\xvec) \triangleq \mathcal{R}\left( \varphi_\GenC(\xvec|y)  \right) \;. 
\end{equation}
Denote the set of region indices associated with a bin as
\begin{equation}
\label{eq:Lz}
 \mathcal{L}(z)=\left\{ \, i_z(\xvec):\, \xvec\in\mathcal{X}^t  \, \right\} \;.
\end{equation}
According to the following lemma, the largest index in $\mathcal{L}(z)$ belongs to the leading input $\xvec^*$, defined in (\ref{eq:leadingu}). In other words the leading input is in the \emph{leading region}.

\begin{lemm}
\label{lemma:leading_region}
Consider a given $z\in\MergeAlph$. Let $i(\xvec)$ be given by (\ref{eq:region_index}) for all $\xvec\in\mathcal{X}^t$, and let $\xvec^*$ be as in (\ref{eq:leadingu}). Then
\[
 i(\xvec^*)= \max\, \left\{ \, i(\xvec):\, \xvec\in\mathcal{X}^t  \, \right\} \;.
 \]
\end{lemm}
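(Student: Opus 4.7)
The plan is to prove the lemma by contradiction. Suppose that there exists some $\xvec' \in \mathcal{X}^t$ with $i(\xvec') > i(\xvec^*)$. I will exhibit a strict inequality between a posteriori probabilities that violates the defining property of $\xvec^*$ in (\ref{eq:leadingu}).

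First I would fix an arbitrary $y \in \mathcal{B}(z)$. By the definition of the bin, $\mathcal{R}(\varphi_\GenC(\xvec|y)) = i(\xvec)$ for every $\xvec \in \mathcal{X}^t$, so by (\ref{eq:regionDefintions}) we have $\varphi_\GenC(\xvec'|y) \geq b_{i(\xvec')}$ and $\varphi_\GenC(\xvec^*|y) < b_{i(\xvec^*)+1}$ (assuming $\varphi_\GenC(\xvec^*|y) < 1$; the case $\varphi_\GenC(\xvec^*|y) = 1$ is handled separately since then $i(\xvec^*) = M$ is already maximal by (\ref{seq:regionSpecialCase}), so there is nothing to prove). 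Under the hypothesis $i(\xvec') > i(\xvec^*)$, the region endpoints satisfy $b_{i(\xvec')} \geq b_{i(\xvec^*)+1}$, and therefore
\[
\varphi_\GenC(\xvec'|y) \;\geq\; b_{i(\xvec')} \;\geq\; b_{i(\xvec^*)+1} \;>\; \varphi_\GenC(\xvec^*|y) \; .
\]

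This inequality holds for every $y \in \mathcal{B}(z)$. Taking the maximum over $y \in \mathcal{B}(z)$ on both sides gives
\[
\max_{y \in \mathcal{B}(z)} \varphi_\GenC(\xvec'|y) \;>\; \max_{y \in \mathcal{B}(z)} \varphi_\GenC(\xvec^*|y) \; ,
\]
which directly contradicts the fact that $\xvec^*$ attains $\arg\max_{\xvec} \max_{y \in \mathcal{B}(z)} \varphi_\GenC(\xvec|y)$ as defined in (\ref{eq:leadingu}). Hence no such $\xvec'$ exists, and $i(\xvec^*) = \max_{\xvec \in \mathcal{X}^t} i(\xvec)$.

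There is no real obstacle here; the only subtlety is the boundary case $\varphi_\GenC(\xvec^*|y) = 1$, which is handled by noting that (\ref{seq:regionSpecialCase}) puts the value $1$ into the last region $M$, making $i(\xvec^*)$ already the global maximum. The proof is essentially a bookkeeping observation that the region indices are monotone in the a posteriori probabilities, combined with the fact that $\xvec^*$ maximizes those probabilities uniformly across the bin.
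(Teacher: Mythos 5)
Your proof is correct. The paper argues directly: it introduces the leading output $y^*=\arg\max_{y\in\mathcal{B}(z)}\varphi_\GenC(\xvec^*|y)$, observes via (\ref{eq:leadingu}) and (\ref{eq:yleading}) that $\varphi_\GenC(\xvec^*|y^*)$ is the maximum of $\varphi_\GenC(\xvec|y)$ over all $(\xvec,y)$ pairs, and then invokes the monotonicity of the quantization map $\mathcal{R}$ to conclude $i(\xvec^*)\geq i(\xvec)$ for every $\xvec$. You instead argue by contradiction, never introducing $y^*$, and unpack monotonicity through the explicit region endpoints $b_i$: if $i(\xvec')>i(\xvec^*)$ then $\varphi_\GenC(\xvec'|y)\geq b_{i(\xvec')}\geq b_{i(\xvec^*)+1}>\varphi_\GenC(\xvec^*|y)$ for every $y\in\mathcal{B}(z)$, and taking the max over $y$ contradicts (\ref{eq:leadingu}). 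Both proofs rest on the same core fact (the quantization indices are non-decreasing in the probability value), but your route is more pedestrian and requires you to flag the boundary case $\varphi_\GenC(\xvec^*|y)=1$, which you handle correctly (it forces $i(\xvec^*)=M$, already maximal; equivalently, if $i(\xvec^*)<M$ then $\varphi_\GenC(\xvec^*|y)<b_{i(\xvec^*)+1}\leq 1$ automatically). The direct argument in the paper sidesteps this entirely. One subtlety you leave implicit but which does go through: the pointwise strict inequality over a finite bin yields a strict inequality between maxima, so the contradiction holds even though ties in (\ref{eq:leadingu}) are broken arbitrarily.
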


\begin{proof}
Define the \emph{leading output} $y^*\in\mathcal{B}(z)$ by
\begin{equation}
\label{eq:yleading}
y^*=y^*(z) \triangleq \arg\max_{y\in\mathcal{B}(z)} \varphi_\GenC(\xvec^*|y) \; . 
\end{equation}
By (\ref{eq:leadingu}) and (\ref{eq:yleading}), we have that
\begin{equation}
\label{eq:phiDoubleStar}
\varphi_\GenC(\xvec^*|y^*) = \max_{\substack{\xvec \in \mathcal{X}^t \\ y \in \mathcal{B}(z)}}  \varphi_\GenC(\xvec|y) \; .
\end{equation}

Recalling the definition of our bins in Subsection~\ref{sec:bin_def}, we deduce that
\[
 \mathcal{R}\left( \varphi_\GenC(\xvec^*|y) \right)=\mathcal{R}\left( \varphi_\GenC(\xvec^*|y^*) \right)\geq\mathcal{R}\left( \varphi_\GenC(\xvec|y) \right)\;, 
\]
for all $y\in\mathcal{B}(z)$ and for all $\xvec\in\mathcal{X}^t$.
\end{proof}

\subsection{Properties of $\APPz$}
Recall that the APP measure $\APPz(\xvec|z)$ was defined in Subsection~\ref{subsec:psiSpecification}. We start this subsection by showing  that  $\APPz$ is ``close'' to the APP of the original channel.
\begin{lemm}
\label{lemma:eta_bound}
Let $\GenChannel$ be a generic $t$-user MAC, and let $\MergeAlph$ be the merged output alphabet conceived through applying the binning procedure to $\GenAlph$.
For each $z\in\MergeAlph$, let $\xvec^* = \xvec^*(z)$ be the leading-input defined by (\ref{eq:leading_input}), and let 
$\APPz(\xvec|z)$ be the probability measure on $\xvec\in\mathcal{X}^t$ defined in (\ref{eq:Merge_APP}).

Then for all $z\in\MergeAlph$ and $y\in\mathcal{B}(z)$,
\[
|\eta\left( \varphi_\GenC(\xvec|y) \right)-\eta\left( \APPz(\xvec|z) \right)|
\leq		\begin{cases}
				\;\frac{1}{\mu} 	&\text{if }\xvec\neq\xvec^* \;,\\
				\frac{q-1}{\mu}	&\text{if }\xvec=\xvec^*\;.
				\end{cases}
\] 
\end{lemm}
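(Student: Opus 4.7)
The non-leading case should be immediate. For $\xvec \neq \xvec^*$, the value $\APPz(\xvec|z) = \min_{y' \in \mathcal{B}(z)} \varphi_\GenC(\xvec|y')$ is itself the APP of some $y' \in \mathcal{B}(z)$, and hence lies in the same quantization region $[b_{i(\xvec)}, b_{i(\xvec)+1})$ as $\varphi_\GenC(\xvec|y)$ (by the very definition of a bin). Corollary~\ref{cor:eta_diff_inBin} then yields $|\eta(\varphi_\GenC(\xvec|y)) - \eta(\APPz(\xvec|z))| \leq 1/\mu$ without further work.

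For the leading case I would set up a telescoping walk. Since $\sum_\xvec \varphi_\GenC(\xvec|y) = \sum_\xvec \APPz(\xvec|z) = 1$, one has
\[
\APPz(\xvec^*|z) - \varphi_\GenC(\xvec^*|y) = \sum_{\xvec \neq \xvec^*} \bigl[\varphi_\GenC(\xvec|y) - \APPz(\xvec|z)\bigr],
\]
with each summand non-negative (by the ``min'' in \eqref{eq:Merge_App:notLeading}) and bounded by $\Delta_{i(\xvec)}$ (since both quantities sit in region $i(\xvec)$). Enumerating $\{\xvec : \xvec \neq \xvec^*\} = \{\xvec_1,\ldots,\xvec_{q-1}\}$, define $a_0 = \varphi_\GenC(\xvec^*|y)$ and $a_j = a_{j-1} + [\varphi_\GenC(\xvec_j|y) - \APPz(\xvec_j|z)]$, so that $a_{q-1} = \APPz(\xvec^*|z)$ and the step $\delta_j := a_{j+1}-a_j$ satisfies $0 \leq \delta_j \leq \Delta_{i(\xvec_{j+1})}$. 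The goal becomes showing $|\eta(a_{j+1}) - \eta(a_j)| \leq 1/\mu$ for each step, after which triangle inequality and telescoping give the desired $(q-1)/\mu$ bound.

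To verify the per-step bound, I would exploit two observations: first, $a_j \geq a_0 \geq b_{i(\xvec^*)}$, so $\mathcal{R}(a_j) \geq i(\xvec^*)$; second, by Lemma~\ref{lemma:leading_region}, $i(\xvec^*) \geq i(\xvec_{j+1})$. If $\mathcal{R}(a_j) < M$, both indices lie in the range covered by Lemma~\ref{lemma:interval_width}, so $\Delta_{\mathcal{R}(a_j)} \geq \Delta_{i(\xvec_{j+1})} \geq \delta_j$, and Lemma~\ref{lemma:eta_diff} delivers $|\eta(a_{j+1}) - \eta(a_j)| \leq 1/\mu$. The potentially troublesome case is $\mathcal{R}(a_j) = M$, where Lemma~\ref{lemma:interval_width} gives no handle on the last interval's width. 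I expect this to be the main obstacle, but it resolves neatly: if $a_j$ lies in the last region, then $a_{j+1} \in [a_j, 1]$ necessarily lies in the same last region as well (since $a_{j+1} \leq a_{q-1} = \APPz(\xvec^*|z) \leq 1$), and Corollary~\ref{cor:eta_diff_inBin} applies directly.

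Summing the $q-1$ per-step bounds via the triangle inequality,
\[
\bigl|\eta(\APPz(\xvec^*|z)) - \eta(\varphi_\GenC(\xvec^*|y))\bigr| \;\leq\; \sum_{j=0}^{q-2} |\eta(a_{j+1}) - \eta(a_j)| \;\leq\; \frac{q-1}{\mu},
\]
which completes the leading case and the proof.
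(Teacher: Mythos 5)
Your proof is correct and rests on the same ingredients as the paper's. The non-leading case is identical. For the leading case, the paper also shows $0\leq\APPz(\xvec^*|z)-\varphi_\GenC(\xvec^*|y)\leq(q-1)\Delta^*$ (where $\Delta^*$ is the leading-region width, maximal among $\mathcal{L}(z)$ by Lemmas~\ref{lemma:leading_region} and \ref{lemma:interval_width}), deduces that at most $q-1$ region boundaries are crossed, and then concludes ``by induction, applying Lemma~\ref{lemma:eta_diff}'' without laying out the intermediate walk. Your telescoping argument supplies that missing detail with a slightly different decomposition: you index the $q-1$ steps by the non-leading inputs $\xvec_{j+1}$ and bound each step by the \emph{local} width $\Delta_{i(\xvec_{j+1})}$ rather than by the uniform $\Delta^*$; the chain $i(\xvec_{j+1})\leq i(\xvec^*)\leq\mathcal{R}(a_j)$ together with Lemma~\ref{lemma:interval_width} then gives $\delta_j\leq\Delta_{i(\xvec_{j+1})}\leq\Delta_{\mathcal{R}(a_j)}$, so Lemma~\ref{lemma:eta_diff} applies at each step. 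Your separate handling of the edge case $\mathcal{R}(a_j)=M$ via Corollary~\ref{cor:eta_diff_inBin} is a correct and necessary patch, since Lemma~\ref{lemma:interval_width} excludes the last interval, so the comparison $\Delta_{i(\xvec_{j+1})}\leq\Delta_M$ is not available there. In short: same spirit, same toolkit, but your walk makes the paper's inductive sketch fully explicit.
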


\begin{proof}
Consider a particular letter $y\in\mathcal{B}(z)$.
For all $\xvec\neq\xvec^*$, we have by (\ref{eq:Merge_App:notLeading}) that $\APPz(\xvec|z)$ belongs to the same quantization interval as 
$\varphi_\GenC(\xvec|y)$. Therefore, the first case is due to Corollary~\ref{cor:eta_diff_inBin}.

As for the second case, let $\{\Delta_i\}_{i=1}^M$ be as in Lemma~\ref{lemma:interval_width}. Also, for the leading region 
$i^*=i(\xvec^*)$, define the \emph{leading width} by
 \[
 \Delta^*=\Delta_{i^*} \;.
\] 
As Lemma~\ref{lemma:leading_region} declares the leading region to be the rightmost region in $\mathcal{L}(z)$, it follows from Lemma~\ref{lemma:interval_width} that either
\[
 i^*=M \;,\quad\text{or }\quad \Delta^*= \max\, \left\{  \Delta_i: i\in \mathcal{L}(z) \right\} \;.
\]
In words, the leading region is either the last region or the widest. 

Suppose first that $i^*<M$. Thus, the leading width is the largest. And so we claim that for all $\xvec\neq\xvec^*$,
\[
0 \leq \varphi_\GenC(\xvec|y)-\APPz(\xvec|z) \leq \Delta_i \leq \Delta^* \; ,
\]
where $i=i(\xvec)=\mathcal{R}\left( \varphi_\GenC(\xvec|y) \right)$. The leftmost inequality follows from (\ref{eq:Merge_App:notLeading}), while the middle follows from $\APPz(\xvec|z)$ and $\varphi_\GenC(\xvec|y)$ belonging to the same quantization interval. The rightmost inequality follows from our observation that $\Delta^*= \max\, \left\{  \Delta_i: i\in \mathcal{L}(z) \right\}$. 
Based on (\ref{eq:Merge_App:leading}), the above implies that
\begin{equation}
\label{eq:leading_APP_bound}
0 \leq \APPz(\xvec^*|z)-\varphi_\GenC(\xvec^*|y)\leq (q-1)\Delta^* \; .
\end{equation}
That is, $\xvec^*$ may have been ``pushed'' several regions higher: $\quad\mathcal{R}\left(\APPz(\xvec^*|z)\right)\geq\mathcal{R}\left(\varphi_\GenC(\xvec^*|y)\right)$.  However, Lemma~\ref{lemma:interval_width} assures that
$\Delta^*$ is no bigger than the width of subsequent regions. Thus
\[
\quad\mathcal{R}\left(\APPz(\xvec^*|z)\right)-\mathcal{R}\left(\varphi_\GenC(\xvec^*|y)\right) \leq q-1 \; ,
\] 
from which the second part of the lemma follows by induction, applying Lemma~\ref{lemma:eta_diff}.

If, on the other hand, $i^*=M$, then $\APPz(\xvec^*|z)$ must also belong to the last (and leading) region. The second part of the lemma follows then 
from Corollary~\ref{cor:eta_diff_inBin}.
\end{proof}

The quantity $\APPz(\xvec^*|z)$ frequently appears as a denominator. The main use of the following lemma is to show that such an expression is well defined.
\begin{lemm}
\label{lemma:APP_bound}
For $z\in\MergeAlph$, let $\xvec^* = \xvec^*(z)$ be the leading-input defined by (\ref{eq:leading_input}), and let 
$\APPz(\xvec|z)$ be the probability measure on $\xvec\in\mathcal{X}^t$ defined in (\ref{eq:Merge_APP}). Then,
\begin{equation}
\label{eq:psiLowerBound}
\APPz(\xvec^*|z) \geq \frac{1}{q} \; ,
\end{equation}
for all $z \in \MergeAlph$. 
\end{lemm}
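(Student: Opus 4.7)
The plan is to bound $\sum_{\xvec \neq \xvec^*} \APPz(\xvec|z)$ from above by $(q-1)/q$, and then use the definition (\ref{eq:Merge_App:leading}) to conclude. The key idea is to exploit the fact that $\xvec^*$ is defined as a \emph{global} maximizer of $\varphi_\GenC(\xvec|y)$ over all $\xvec \in \mathcal{X}^t$ and all $y \in \mathcal{B}(z)$ simultaneously (as already formalized in (\ref{eq:phiDoubleStar})), not merely a per-letter maximizer.

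First, I would introduce the leading output $y^* \in \mathcal{B}(z)$ from (\ref{eq:yleading}). By (\ref{eq:phiDoubleStar}), the value $\varphi_\GenC(\xvec^*|y^*)$ is the largest entry of the probability vector $(\varphi_\GenC(\xvec|y^*))_{\xvec \in \mathcal{X}^t}$, which sums to $1$ and has $q$ entries; therefore $\varphi_\GenC(\xvec^*|y^*) \geq 1/q$.

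Next, I would use the definition (\ref{eq:Merge_App:notLeading}): for every $\xvec \neq \xvec^*$,
\begin{equation*}
\APPz(\xvec|z) = \min_{y \in \mathcal{B}(z)} \varphi_\GenC(\xvec|y) \leq \varphi_\GenC(\xvec|y^*) \; .
\end{equation*}
Summing these inequalities and using that $\varphi_\GenC(\cdot|y^*)$ is a probability distribution,
\begin{equation*}
\sum_{\xvec \neq \xvec^*} \APPz(\xvec|z) \leq \sum_{\xvec \neq \xvec^*} \varphi_\GenC(\xvec|y^*) = 1 - \varphi_\GenC(\xvec^*|y^*) \leq 1 - \tfrac{1}{q} \; .
\end{equation*}
Combining this with the normalization (\ref{eq:Merge_App:leading}) yields $\APPz(\xvec^*|z) \geq 1/q$, as desired.

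There is no real obstacle here; the only subtlety worth flagging is that one must use the \emph{global} maximizer property of $\xvec^*$ via $y^*$, since the definition of $\APPz$ for non-leading inputs takes a minimum over $y \in \mathcal{B}(z)$ while we need to bound this minimum by the value at the specific letter $y^*$ where $\xvec^*$ achieves its peak. Everything else is a one-line calculation.
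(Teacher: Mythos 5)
Your proof is correct and follows essentially the same route as the paper's: identify the leading output $y^*$, observe that $\varphi_\GenC(\xvec^*|y^*) \geq 1/q$ because it is the largest entry of a length-$q$ probability vector, and conclude via $\APPz(\xvec^*|z) \geq \varphi_\GenC(\xvec^*|y^*)$. The only cosmetic difference is that you re-derive the last inequality directly from (\ref{eq:Merge_App:notLeading}) and (\ref{eq:Merge_App:leading}), whereas the paper cites the left half of (\ref{eq:leading_APP_bound}); the underlying computation is identical.
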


\begin{proof}
Consider a given $z\in\MergeAlph$. Let the leading-output $y^*\in\mathcal{B}(z)$ be as in (\ref{eq:yleading}).
On the one hand, since the sum of $\varphi_\GenC(\xvec|y^*)$ over $\xvec\in\mathcal{X}^t$ is $1$, there exists a $\xvec\in\mathcal{X}^t$
such that
\begin{IEEEeqnarray}{l}
\label{eq:leading_APP_output}
\varphi_\GenC(\xvec|y^*)\geq \frac{1}{q} \; .
\end{IEEEeqnarray} 
On the other hand, by (\ref{eq:phiDoubleStar}), we have that
\begin{IEEEeqnarray*}{l}
\varphi_\GenC(\xvec^*|y^*)\geq \varphi_\GenC(\xvec|y^*) \; .
\end{IEEEeqnarray*} 
Thus,
\begin{equation}
\label{eq:APP_bound}
\APPz(\xvec^*|z) \geq \varphi_\GenC(\xvec^*|y^*) \geq \frac{1}{q} \; ,
\end{equation}
where the left inequality follows by (\ref{eq:leading_APP_bound}).
\end{proof}

Let $z \in \MergeAlph$ and $y\in\mathcal{B}(z)$ be given.  
We will shortly make use of the quantity
\begin{equation}
\label{eq:gamma}
\gamma(y)
\triangleq		 \frac{\varphi_\GenC(\xvec^*|y)}{\APPz(\xvec^*|z)} \; .
\end{equation}
Note that by (\ref{eq:psiLowerBound}) , $\gamma(y)$ is indeed well defined. Next, we claim that 
\begin{equation}
\label{eq:APP_uleading}
\APPz(\xvec^*|z) \geq \varphi_\GenC(\xvec^*|y) \geq \frac{1}{q}-\frac{1}{\mu} > 0\; .
\end{equation}
To justify this claim, note that the leftmost inequality follows from (\ref{eq:phiDoubleStar}) and (\ref{eq:APP_bound}). The middle inequality follows from (\ref{eq:quantization}) and (\ref{eq:APP_bound}) (recall that $\mathcal{R}\left(\varphi_\GenC(\xvec^*|y)\right)=\mathcal{R}\left(\varphi_\GenC(\xvec^*|y^*)\right)$ for all  $y\in\mathcal{B}(z)$). Finally, the rightmost inequality follows from (\ref{eq:mu_req}). 

Therefore,
\begin{equation}
\label{eq:gamma_range}
 0\leq \gamma (y) \leq 1 \;.
\end{equation}

Recall that by Lemma~\ref{lemma:eta_bound}, we have that  $\APPz$ is close to the APP of the original channel, $\varphi_\GenC$, in an additive sense (for large enough $\mu$). The following lemma states that $\APPz$ and $\varphi_\GenC$ are close in a multiplicative sense as well, when we are considering $\xvec^*$. The proof is given in the appendix.

\begin{lemm}
\label{lemma:gamma_bound}
Let $\GenChannel$ be a $t$-user MAC ,
 and let $\gamma(y)$ be given by (\ref{eq:gamma}). Then for all $y\in\GenAlph$,
\begin{equation}
\label{eq:gamma_bound}
0 \leq 1 - \frac{q(q-1)}{\mu}  
\leq \gamma(y) \leq 1 \;.
\end{equation}
\end{lemm}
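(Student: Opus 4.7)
My plan is to observe that the upper bound $\gamma(y) \leq 1$ has already been established in (\ref{eq:gamma_range}), and also that the bottom inequality $1 - q(q-1)/\mu \geq 0$ is immediate from the standing assumption (\ref{eq:mu_req}). Thus, the only real work lies in proving the middle inequality $\gamma(y) \geq 1 - q(q-1)/\mu$.

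The natural starting point is to rewrite
\[
\gamma(y) \;=\; \frac{\varphi_\GenC(\xvec^*|y)}{\APPz(\xvec^*|z)}
\;=\; 1 \;-\; \frac{\APPz(\xvec^*|z) - \varphi_\GenC(\xvec^*|y)}{\APPz(\xvec^*|z)},
\]
so that it suffices to upper-bound the numerator and lower-bound the denominator. The denominator is handled immediately by Lemma~\ref{lemma:APP_bound}, giving $\APPz(\xvec^*|z) \geq 1/q$.

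The core of the argument is therefore the numerator. Here I would avoid splitting into the cases $i^* < M$ and $i^* = M$ (as in the proof of Lemma~\ref{lemma:eta_bound}) and instead exploit the normalization identity directly. Using (\ref{eq:Merge_App:leading}), one has
\[
\APPz(\xvec^*|z) - \varphi_\GenC(\xvec^*|y) \;=\; \sum_{\xvec \neq \xvec^*}\!\bigl(\varphi_\GenC(\xvec|y) - \APPz(\xvec|z)\bigr).
\]
For each $\xvec \neq \xvec^*$, (\ref{eq:Merge_App:notLeading}) gives $0 \leq \varphi_\GenC(\xvec|y) - \APPz(\xvec|z)$, and since $\varphi_\GenC(\xvec|y)$ and $\APPz(\xvec|z)$ lie in the same quantization interval, their difference is at most the interval width. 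By the defining inequality in (\ref{eq:quantization}), every interval has width at most $1/\mu$, so each term in the sum is at most $1/\mu$, giving a total of at most $(q-1)/\mu$.

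Combining the two bounds yields $\gamma(y) \geq 1 - \frac{(q-1)/\mu}{1/q} = 1 - q(q-1)/\mu$, which is exactly the desired inequality. I do not anticipate any real obstacle: the bound on each interval width is the only slightly delicate point, and it is already baked into (\ref{eq:quantization}), so nothing beyond bookkeeping is required. The only thing to be careful about is that the denominator estimate $\APPz(\xvec^*|z) \geq 1/q$ (as opposed to the weaker $\geq 1/q - 1/\mu$ that holds for $\varphi_\GenC(\xvec^*|y)$) is essential for landing the clean constant $q(q-1)$ rather than something slightly worse.
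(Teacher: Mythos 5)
Your proof is correct and follows essentially the same route as the paper: upper-bound $\APPz(\xvec^*|z) - \varphi_\GenC(\xvec^*|y)$ by $(q-1)/\mu$ and then divide by $\APPz(\xvec^*|z) \geq 1/q$ from Lemma~\ref{lemma:APP_bound}. The only difference is that you re-derive the numerator bound inline from the normalization identity (\ref{eq:Merge_App:leading}) together with the universal interval-width bound from (\ref{eq:quantization}), whereas the paper cites (\ref{eq:leading_APP_bound}) and then applies (\ref{eq:quantization}); your inline version is arguably a touch cleaner since (\ref{eq:leading_APP_bound}) was stated in the proof of Lemma~\ref{lemma:eta_bound} under the sub-case $i^* < M$, and your derivation sidesteps that casework entirely.
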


\subsection{The MAC $\UpgC$} 
\label{sub:UpgChannel}
We now define the channel $\UpgChannel$, an upgraded version of $\GenChannel$. The definition makes heavy use of $\alpha_\xvec (y)$, defined in (\ref{eq:alpha}). Thus, as a first step, we prove the following Lemma.
\begin{lemm}
\label{lemm:alpha}
Let $\alpha_\xvec (y)$, be as in (\ref{eq:alpha}). Then, $\alpha_\xvec (y)$ is well defined and satisfies
\begin{equation}
\label{eq:alphaBounded}
0 \leq \alpha_\xvec (y) \leq 1 \; .
\end{equation}
\end{lemm}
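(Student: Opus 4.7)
The plan is to verify both well-definedness and the two-sided bound by breaking into the cases appearing in (\ref{eq:alpha}) and directly applying the already-proved bounds on $\APPz$ and $\gamma$.

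\textbf{Well-definedness.} In the degenerate case of (\ref{eq:alpha_exception}) we have $\alpha_\xvec(y) = 1$, which is clearly well defined. In the generic case (\ref{eq:alpha_general}), the hypothesis $\varphi_\GenC(\xvec|y) \neq 0$ takes care of one denominator. The remaining denominator $\APPz(\xvec^*|z)$ is strictly positive: Lemma~\ref{lemma:APP_bound} gives $\APPz(\xvec^*|z) \geq 1/q > 0$. Hence the ratio is well defined in both cases.

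\textbf{Nonnegativity.} In case (\ref{eq:alpha_exception}) this is trivial. In case (\ref{eq:alpha_general}), all four quantities appearing in the ratio are nonnegative (they are either probabilities from $\varphi_\GenC$ or values of the APP measure $\APPz$), and the denominators are strictly positive, so $\alpha_\xvec(y) \geq 0$.

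\textbf{Upper bound.} In case (\ref{eq:alpha_exception}), $\alpha_\xvec(y) = 1$. In case (\ref{eq:alpha_general}), first note that for $\xvec = \xvec^*$ the expression collapses to $1$ identically. For $\xvec \neq \xvec^*$, rewrite the ratio as
\[
\alpha_\xvec(y) \;=\; \frac{\APPz(\xvec|z)}{\varphi_\GenC(\xvec|y)} \cdot \gamma(y),
\]
using the definition (\ref{eq:gamma}) of $\gamma(y)$. By (\ref{eq:Merge_App:notLeading}), $\APPz(\xvec|z) = \min_{y'\in\mathcal{B}(z)} \varphi_\GenC(\xvec|y')$, and since $y\in\mathcal{B}(z)$, we get $\APPz(\xvec|z) \leq \varphi_\GenC(\xvec|y)$; hence the first factor is at most $1$. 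The second factor satisfies $\gamma(y) \leq 1$ by (\ref{eq:gamma_range}). Multiplying these two bounds gives $\alpha_\xvec(y) \leq 1$, which combined with nonnegativity completes the proof.

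The argument is essentially bookkeeping given the earlier lemmas; the only subtlety is recognizing that the ``$\min$'' definition of $\APPz$ for non-leading inputs is precisely what makes the first factor $\leq 1$, while the leading-input normalization (\ref{eq:Merge_App:leading}) is what forces us to invoke Lemma~\ref{lemma:APP_bound} for the denominator of $\gamma(y)$. No real obstacle is anticipated.
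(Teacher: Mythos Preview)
Your proof is correct and follows essentially the same route as the paper's: handle the degenerate case trivially, use Lemma~\ref{lemma:APP_bound} for well-definedness of the $\APPz(\xvec^*|z)$ denominator, observe $\alpha_{\xvec^*}(y)=1$, and for $\xvec\neq\xvec^*$ bound each of the two fractions by $1$ via (\ref{eq:Merge_App:notLeading}) and the leading-input inequality. The only cosmetic difference is that you package the second fraction as $\gamma(y)$ and cite (\ref{eq:gamma_range}), whereas the paper cites (\ref{eq:APP_uleading}) directly; these are equivalent since (\ref{eq:gamma_range}) is itself derived from (\ref{eq:APP_uleading}).
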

\begin{proof}
The claim obviously holds if $\varphi_\GenC(\xvec|y) = 0$ due to (\ref{eq:alpha_exception}). So, we henceforth assume that $\varphi_\GenC(\xvec|y) > 0$, and thus have that
\begin{equation}
\label{eq:alphaInterestingCase}
\alpha_\xvec (y) = \frac{\APPz(\xvec|z)}{\varphi_\GenC(\xvec|y)} \cdot \frac{\varphi_\GenC(\xvec^*|y)}{\APPz(\xvec^*|z)} \; .
\end{equation}

By assumption, the first denominator is positive. Also, by (\ref{eq:psiLowerBound}), the second denominator is positive, and thus $\alpha_\xvec (y)$ is indeed well defined.

We now consider two cases. If $\xvec=\xvec^*$, then $\alpha_\xvec (y) = 1$, and the claim is obviously true. Thus, assume that $\xvec \neq \xvec^*$. Since we are dealing with probabilities, we must have that $\alpha_\xvec (y) \geq 0$. Consider the two fractions on the RHS of (\ref{eq:alphaInterestingCase}). By (\ref{eq:Merge_App:notLeading}), the first fraction is at most $1$, and by (\ref{eq:APP_uleading}) the second fraction is at most $1$. Thus, $\alpha_\xvec (y)$ is at most $1$.
\end{proof}

We now define $\UpgChannel$, an upgraded version of $\GenC$. 
For all $y\in\GenAlph$ and for all $\xvec\in\mathcal{X}^t$, define
\begin{subequations}
\label{eq:UpgC}

\begin{equation}
\UpgC(y|\xvec)= \alpha_\xvec(y)\cdot\GenC(y|\xvec) \;. \label{eq:UpgC_y}
\end{equation}
Whereas, for all $\boost_\mathbf{v}\in\BoostSet$ and for all $\xvec\in\mathcal{X}^t$, define
\begin{equation}
\UpgC(\boost_\mathbf{v}|\xvec) = 
\begin{cases}
																	  \varepsilon_\xvec=\sum\limits_{y\in\GenAlph}^{} (1-\alpha_\xvec (y))\GenC(y|\xvec) &\text{if } \xvec=\mathbf{v} \;, \\
																		0																													&\text{otherwise.}	
																	 \end{cases}
\end{equation}
\end{subequations}


The following lemma states that $\UpgC$ is indeed an upgraded version of $\GenC$.
\begin{lemm}
\label{lemma:upgradation}
Let $\GenChannel$ be a $t$-user MAC, and let $\UpgChannel$ be the MAC obtained by the procedure above. Then, $\UpgC$ is well-defined and is upgraded with respect to $\GenC$.
That is,
\[
 \UpgC \succeq \GenC \;. 
\]
\end{lemm}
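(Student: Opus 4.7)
I plan to prove the lemma in two parts: first, that $\UpgC$ is a well-defined channel (each row is a probability distribution), and second, that $\UpgC \succeq \GenC$ by exhibiting an explicit intermediate channel $\mathcal{P}$ such that $\GenC$ is obtained by concatenating $\UpgC$ with $\mathcal{P}$.

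For well-definedness, fix $\xvec \in \mathcal{X}^t$. By Lemma~\ref{lemm:alpha}, $\alpha_\xvec(y) \in [0,1]$, so $\UpgC(y|\xvec) = \alpha_\xvec(y)\GenC(y|\xvec) \geq 0$ and each term $(1-\alpha_\xvec(y))\GenC(y|\xvec)$ in the sum defining $\varepsilon_\xvec$ is non-negative; hence $\UpgC(\boost_\mathbf{v}|\xvec) \geq 0$ as well. For the normalization, I would observe that only $\boost_\xvec$ has positive mass given input $\xvec$ among the boost symbols, so
\begin{equation*}
\sum_{y \in \GenAlph} \UpgC(y|\xvec) + \sum_{\boost_\mathbf{v} \in \BoostSet} \UpgC(\boost_\mathbf{v}|\xvec) = \sum_{y \in \GenAlph} \alpha_\xvec(y)\GenC(y|\xvec) + \varepsilon_\xvec = \sum_{y \in \GenAlph} \GenC(y|\xvec) = 1,
\end{equation*}
where the second equality plugs in the definition (\ref{eq:epsilon}) of $\varepsilon_\xvec$.

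For upgradation, I define a channel $\mathcal{P}: \UpgAlph \to \GenAlph$ as follows. For the original output symbols, $\mathcal{P}$ acts as the identity, i.e.\ for $y' \in \GenAlph$ and $y \in \GenAlph$, set $\mathcal{P}(y|y') = 1$ if $y=y'$ and $0$ otherwise. For each boost symbol $\boost_\mathbf{v} \in \BoostSet$ (where, by construction, $\varepsilon_\mathbf{v} > 0$), set
\begin{equation*}
\mathcal{P}(y|\boost_\mathbf{v}) = \frac{(1-\alpha_\mathbf{v}(y))\GenC(y|\mathbf{v})}{\varepsilon_\mathbf{v}}, \qquad y \in \GenAlph.
\end{equation*}
The denominator is positive by definition of $\BoostSet$, each numerator is non-negative, and by (\ref{eq:epsilon}) the numerators sum (over $y \in \GenAlph$) to $\varepsilon_\mathbf{v}$, so $\mathcal{P}$ is a valid channel.

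It remains to verify that $\sum_{y' \in \UpgAlph} \UpgC(y'|\xvec)\cdot\mathcal{P}(y|y') = \GenC(y|\xvec)$ for every $\xvec \in \mathcal{X}^t$ and every $y \in \GenAlph$. Since $\mathcal{P}(y|y') = 0$ whenever $y' \in \GenAlph \setminus \{y\}$, and $\UpgC(\boost_\mathbf{v}|\xvec) = 0$ whenever $\mathbf{v} \neq \xvec$, only two terms survive:
\begin{equation*}
\UpgC(y|\xvec)\cdot\mathcal{P}(y|y) + \UpgC(\boost_\xvec|\xvec)\cdot\mathcal{P}(y|\boost_\xvec) = \alpha_\xvec(y)\GenC(y|\xvec) + \varepsilon_\xvec \cdot \frac{(1-\alpha_\xvec(y))\GenC(y|\xvec)}{\varepsilon_\xvec} = \GenC(y|\xvec),
\end{equation*}
provided $\varepsilon_\xvec > 0$; if instead $\varepsilon_\xvec = 0$ then $(1-\alpha_\xvec(y))\GenC(y|\xvec) = 0$ for every $y$ (since it's a sum of non-negatives), so the second term vanishes and the identity still holds trivially. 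The main subtlety is just this corner case $\varepsilon_\xvec = 0$, which requires checking that the missing boost symbol $\boost_\xvec \notin \BoostSet$ does not prevent the identity from holding; I do not expect any genuine difficulty beyond bookkeeping.
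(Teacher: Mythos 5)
Your proof is correct and takes essentially the same approach as the paper: you exhibit the identical intermediate channel $\mathcal{P}$ (identity on $\GenAlph$, and redistribution of each boost symbol $\boost_\mathbf{v}$ over $\GenAlph$ proportionally to $(1-\alpha_\mathbf{v}(y))\GenC(y|\mathbf{v})$) and verify the concatenation, treating $\varepsilon_\xvec = 0$ as the one corner case. Your write-up is in fact slightly more careful on two small points that the paper glosses over: you explicitly verify that each row of $\UpgC$ normalizes to $1$, and you index the numerator of $\mathcal{P}(y|\boost_\mathbf{v})$ by $\mathbf{v}$ rather than $\xvec$ (the paper's displayed formula has a minor notational slip here, writing $\alpha_\xvec(y)\GenC(y|\xvec)$ where $\alpha_\mathbf{u}(y)\GenC(y|\mathbf{u})$ is meant).
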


\begin{proof}
Based on Lemma~\ref{lemm:alpha}, it can be easily verified that $\UpgC$ is indeed well-defined. We define the following intermediate channel $\mathcal{P}:(\UpgAlph)\rightarrow\GenAlph$, and prove the lemma by showing that $\GenC$ is obtained by the concatenation of $\UpgC$ followed by $\mathcal{P}$.   
Define for all $y\in\GenAlph$ and for all $y'\in(\UpgAlph)$,
\[
\mathcal{P} (y|y')=		\begin{cases}
												1 &\text{if $y'=y\in\GenAlph$} \;, \\
												\frac{\left[1-\alpha_\xvec(y)\right]\cdot\GenC(y|\xvec)}{\varepsilon_{\mathbf{u}}}	&\text{if } y'=\boost_{\mathbf{u}}\in\BoostSet\,,\\
												0 &\text{otherwise.}
												\end{cases}
\]

Let $y\in\GenAlph$ and  $\xvec\in\mathcal{X}$ be given. Now consider the sum
\[
 \sum_{y'\in\UpgAlph} \UpgC(y'|\xvec)\cdot \mathcal{P}(y|y')=
 \UpgC(y|\xvec)\cdot 1 + \sum_{\boost\in\BoostSet} \UpgC(\boost|\xvec)\cdot \mathcal{P}(y|\boost) \;. 
\]
Consider first the case in which $\varepsilon_\xvec=0$. In this case, the sum term, in the RHS, is zero (see (\ref{eq:BoostSet})). Moreover, (\ref{eq:alpha_exception}) and (\ref{eq:epsilon}) imply that $\alpha_\xvec(y)=1$. And so we have, by (\ref{eq:UpgC_y}), that
\[
 \sum_{y'\in\UpgAlph} \UpgC(y'|\xvec)\cdot \mathcal{P}(y|y')=\GenC(y|\xvec) \;.
\]

Next, consider the case where $\varepsilon_\xvec>0$. We have that
\begin{IEEEeqnarray*}{rCl}
\sum_{y'\in\UpgAlph} \UpgC(y'|\xvec)\cdot \mathcal{P}(y|y') &=& \UpgC(y|\xvec) + \varepsilon_\xvec\cdot \mathcal{P}(y|\boost_\xvec)\\ 
&=&\alpha_\xvec(y)\GenC(y|\xvec) + \left[1-\alpha_\xvec(y)\right]\cdot\GenC(y|\xvec) \\
&=&\GenC(y|\xvec) \;.
\end{IEEEeqnarray*}
 
\end{proof}

A boost symbol carries perfect information about what was transmitted through the channel. We now bound from above the average probability of receiving a boost symbol. This result will be useful in the proof of Theorem~\ref{th:upgraded_MAC}, where we bound the sum-rate increment of our upgraded approximation.
\begin{lemm}
\label{lemma:epsilon_bound}
Let $\varepsilon_\xvec$ be given by (\ref{eq:epsilon}) for all $\xvec\in\mathcal{X}^t$. Then,
\[ 
\sum_{\xvec\in\mathcal{X}^t} \left(\mathbb{P}(\bfX=\xvec)\cdot \varepsilon_\xvec\right) \leq \frac{q(q-1)}{\mu} \;.
\] 
\end{lemm}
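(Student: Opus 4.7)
My plan is to swap the order of summation and rewrite the product $\mathbb{P}(\bfX=\xvec)\cdot\GenC(y|\xvec)$ via Bayes' rule as $p_\GenC(y)\cdot\varphi_\GenC(\xvec|y)$. After substituting the definition of $\alpha_\xvec(y)$ from (\ref{eq:alpha_general}), the inner sum over $\xvec$ should telescope because $\varphi_\GenC(\cdot|y)$ and $\APPz(\cdot|z)$ are both probability distributions on $\inputAlphabet$ summing to $1$. What remains should be an average of $1-\gamma(y)$, which is then bounded by Lemma~\ref{lemma:gamma_bound}.

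More concretely, first I would observe that
\[
\sum_{\xvec}\mathbb{P}(\bfX=\xvec)\varepsilon_\xvec \;=\; \sum_{y\in\GenAlph} p_\GenC(y)\sum_{\xvec\in\mathcal{X}^t}\bigl(1-\alpha_\xvec(y)\bigr)\varphi_\GenC(\xvec|y).
\]
For each $y\in\GenAlph$, let $z$ be the bin containing $y$. When $\varphi_\GenC(\xvec|y)>0$, the definition (\ref{eq:alpha_general}) together with (\ref{eq:gamma}) gives
\[
\bigl(1-\alpha_\xvec(y)\bigr)\varphi_\GenC(\xvec|y) \;=\; \varphi_\GenC(\xvec|y)-\APPz(\xvec|z)\,\gamma(y).
\]
For the edge case $\varphi_\GenC(\xvec|y)=0$ with $\xvec\neq\xvec^*$, the LHS is $0$ by (\ref{eq:alpha_exception}), and the RHS is also $0$ because $\APPz(\xvec|z)=\min_{y'\in\mathcal{B}(z)}\varphi_\GenC(\xvec|y')\leq\varphi_\GenC(\xvec|y)=0$. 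The case $\xvec=\xvec^*$ with $\varphi_\GenC(\xvec^*|y)=0$ is excluded by (\ref{eq:APP_uleading}). So the identity holds for every $\xvec$.

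Summing over $\xvec\in\mathcal{X}^t$ and using that $\sum_\xvec \varphi_\GenC(\xvec|y) = 1 = \sum_\xvec \APPz(\xvec|z)$, the inner sum collapses to $1-\gamma(y)$, giving
\[
\sum_{\xvec}\mathbb{P}(\bfX=\xvec)\varepsilon_\xvec \;=\; \sum_{y\in\GenAlph}p_\GenC(y)\bigl(1-\gamma(y)\bigr).
\]
By Lemma~\ref{lemma:gamma_bound}, $1-\gamma(y)\leq q(q-1)/\mu$ for every $y$, so the bound follows after factoring this constant out and using $\sum_y p_\GenC(y)=1$.

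No part of this seems to be a major obstacle; the only mild subtlety is bookkeeping the $\varphi_\GenC(\xvec|y)=0$ cases so that the algebraic identity $(1-\alpha_\xvec(y))\varphi_\GenC(\xvec|y) = \varphi_\GenC(\xvec|y)-\APPz(\xvec|z)\gamma(y)$ holds uniformly, which is what lets the telescoping go through without having to split the sum.
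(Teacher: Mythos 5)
Your proof is correct and takes essentially the same route as the paper: both swap the order of summation, use Bayes' rule to convert $\mathbb{P}(\bfX=\xvec)\GenC(y|\xvec)$ into $p_\GenC(y)\varphi_\GenC(\xvec|y)$, exploit that $\alpha_\xvec(y)\varphi_\GenC(\xvec|y)=\APPz(\xvec|z)\gamma(y)$ so the inner sum telescopes (with the same bookkeeping of the $\varphi_\GenC(\xvec|y)=0$ cases), and finish with Lemma~\ref{lemma:gamma_bound}. The only cosmetic difference is that you derive the exact identity $\sum_{\xvec}\mathbb{P}(\bfX=\xvec)\varepsilon_\xvec=\sum_{y}p_\GenC(y)\bigl(1-\gamma(y)\bigr)$ before applying the bound, whereas the paper writes the quantity as $1$ minus a sum and applies the $\gamma(y)$ lower bound inside that sum before simplifying; these are the same computation in a slightly different order.
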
 

\begin{proof}
By definition (\ref{eq:epsilon}), we have that
\begin{IEEEeqnarray*}{rCl}
\sum_{\xvec\in\mathcal{X}^t} \mathbb{P}(\bfX=\xvec)\cdot\varepsilon_\xvec \;&=&\; \sum_{\xvec\in\mathcal{X}^t} \left[\, \mathbb{P}(\bfX=\xvec)\cdot \sum\limits_{y\in\GenAlph}^{}
 (1-\alpha_\xvec (y))\GenC(y|\xvec) \,\right] \\
&=&1-\sum_{\xvec\in\mathcal{X}^t}\left[\, \mathbb{P}(\bfX=\xvec)\cdot  \sum\limits_{y\in\GenAlph}^{} \alpha_\xvec (y)\GenC(y|\xvec)\,\right] \;. \label{eq:epsilon_sum}\IEEEyesnumber
\end{IEEEeqnarray*}
We now bound the second term. We have that 
\begin{IEEEeqnarray*}{rCl}
\sum_{\xvec\in\mathcal{X}^t}\left[\, \mathbb{P}(\bfX=\xvec)\cdot  \sum\limits_{y\in\GenAlph}^{} \alpha_\xvec (y)\GenC(y|\xvec)\,\right] &\;=\;&
 \sum\limits_{y\in\GenAlph}^{}\sum_{\substack{ \xvec\in\mathcal{X}^t:\\ \GenC(y|\xvec)>0}} \mathbb{P}(\bfX=\xvec)\cdot   \alpha_\xvec (y) \cdot \GenC(y|\xvec) \\&=&
\sum_{z\in\MergeAlph}\sum_{y\in\mathcal{B}(z)}\sum_{\substack{ \xvec\in\mathcal{X}^t:\\ \varphi_\GenC(\xvec|y)>0}} \mathbb{P}(\bfX=\xvec)\cdot \frac{\APPz(\xvec|z)}{\varphi_\GenC(\xvec|y)}\cdot\gamma(y)\cdot
			\GenC(y|\xvec)\\&\geq&
		\left( 1-\frac{q(q-1)}{\mu}\right)\sum_{z\in\MergeAlph}\sum_{y\in\mathcal{B}(z)}\sum_{\substack{ \xvec\in\mathcal{X}^t:\\ \varphi_\GenC(\xvec|y)>0}}\APPz(\xvec|z)\cdot\frac{\mathbb{P}(\bfX=\xvec)\cdot \GenC(y|\xvec)}{\varphi_\GenC(\xvec|y)} \\&=&
			\left( 1-\frac{q(q-1)}{\mu}\right)\sum_{z\in\MergeAlph}\sum_{y\in\mathcal{B}(z)}\sum_{\xvec\in\mathcal{X}^t}\APPz(\xvec|z)\cdot p_\GenC(y) \\&=&
		\left( 1-\frac{q(q-1)}{\mu}\right)\sum_{z\in\MergeAlph} \sum_{y\in\mathcal{B}(z)} p_\GenC(y)\sum_{\xvec\in\mathcal{X}^t}\APPz(\xvec|z) \\&=&
		 1-\frac{q(q-1)}{\mu}\;, \label{eq:second_term}\IEEEyesnumber
		\end{IEEEeqnarray*}
where the inequality is due to Lemma~\ref{lemma:gamma_bound}, and the equality that follows it is due to the observation below. If $\varphi_\GenC(\xvec|y)=0$, then based on 
(\ref{eq:APP_uleading}), we have that $\xvec\neq\xvec^*$. Therefore, by (\ref{eq:Merge_App:notLeading}), $\varphi_\GenC(\xvec|y)=0$ implies that $\APPz(\xvec|z)=0$ as well. That in turn leads to our observation that
\begin{equation}
\label{eq:sum_APPz}
\sum_{\substack{ \xvec\in\mathcal{X}^t:\\ \varphi_\GenC(\xvec|y)>0}}\APPz(\xvec|z)=\sum_{\xvec\in\mathcal{X}^t}\APPz(\xvec|z)=1 \;.
\end{equation}
As the  second term of (\ref{eq:epsilon_sum}) is bounded by (\ref{eq:second_term}), the proof follows.
\end{proof}

\subsection{Consolidation}
In the previous section, we defined $\UpgChannel$ which is an upgraded version of $\GenChannel\,$. 
Note that the output alphabet of $\UpgC$ is \emph{larger} than that of $\GenC$, and our original aim was to \emph{reduce} the output alphabet size. We do this now by consolidating letters which essentially carry the same information. 

Consider the output alphabet $\UpgAlph$ of our upgraded MAC $\UpgC$, compared to the original output alphabet $\GenAlph$. Note that, while the output letters $y\in\GenAlph$ are the same output letters we started with, their APP values are \emph{modified} and satisfy the following.  
\begin{lemm}
\label{lemma:same_APP}
Let $\UpgChannel$ be the MAC defined in Subsection~\ref{sub:UpgChannel}. Then, all the output letters $y\in\mathcal{B}(z)$ have the same modified APP values (for each $\xvec\in\mathcal{X}^t$ separately).
Namely,
\[
 \varphi_{\UpgC}(\xvec|y)=\APPz(\xvec|z) \;, 
\]
for all $\xvec\in\mathcal{X}^t$, and for all $z\in\MergeAlph$ and $y\in\mathcal{B}(z)$.
\end{lemm}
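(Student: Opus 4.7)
The plan is to compute $\varphi_{\UpgC}(\xvec|y)$ directly from Bayes' rule applied to the channel $\UpgC$, and reduce it to $\APPz(\xvec|z)$ by exploiting the very identity that motivated the definition of $\alpha_\xvec(y)$ in (\ref{eq:alpha_general}).

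First I would write, using (\ref{eq:APP}) and (\ref{eq:UpgC_y}),
\begin{equation*}
\varphi_{\UpgC}(\xvec|y) = \frac{\mathbb{P}(\bfX=\xvec)\cdot \alpha_\xvec(y)\GenC(y|\xvec)}{\sum_{\mathbf{v}\in\mathcal{X}^t}\mathbb{P}(\bfX=\mathbf{v})\cdot \alpha_{\mathbf{v}}(y)\GenC(y|\mathbf{v})} \; .
\end{equation*}
Since $y \in \GenAlph$ has been purged of zero-probability letters, $p_\GenC(y) > 0$ and we can rewrite $\mathbb{P}(\bfX=\xvec)\GenC(y|\xvec) = \varphi_\GenC(\xvec|y)\cdot p_\GenC(y)$, so the task reduces to showing the key identity
\begin{equation*}
\alpha_\xvec(y)\cdot \varphi_\GenC(\xvec|y) = \APPz(\xvec|z)\cdot \gamma(y) \; ,
\end{equation*}
for every $\xvec \in \inputAlphabet$, where $\gamma(y)$ is the quantity defined in (\ref{eq:gamma}).

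The identity is immediate in the case $\varphi_\GenC(\xvec|y) > 0$, by unfolding (\ref{eq:alpha_general}) and cancelling the $\varphi_\GenC(\xvec|y)$ factors. The main (minor) obstacle is the exceptional case $\varphi_\GenC(\xvec|y)=0$, where $\alpha_\xvec(y)$ was set to $1$ via (\ref{eq:alpha_exception}). Here I would argue that $\APPz(\xvec|z)=0$ as well, so that both sides vanish: by (\ref{eq:APP_uleading}) we have $\varphi_\GenC(\xvec^*|y) > 0$, so $\xvec\neq\xvec^*$, and then (\ref{eq:Merge_App:notLeading}) gives $\APPz(\xvec|z) = \min_{y'\in\mathcal{B}(z)}\varphi_\GenC(\xvec|y') \leq \varphi_\GenC(\xvec|y) = 0$. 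This dispatches the only delicate step.

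With the identity in hand, the factor $\gamma(y)\cdot p_\GenC(y)$ appears in both numerator and denominator of $\varphi_{\UpgC}(\xvec|y)$; it is strictly positive by the purging assumption together with (\ref{eq:gamma_range}) (more precisely (\ref{eq:APP_uleading})), so it cancels. What remains is $\APPz(\xvec|z)$ divided by $\sum_{\mathbf{v}}\APPz(\mathbf{v}|z)$, which equals $\APPz(\xvec|z)$ by the observation (\ref{eq:sum_APPz}) that $\APPz(\cdot|z)$ is a probability measure on $\inputAlphabet$. This yields $\varphi_{\UpgC}(\xvec|y) = \APPz(\xvec|z)$, independent of the particular $y \in \mathcal{B}(z)$ chosen, completing the proof.
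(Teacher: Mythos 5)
Your proof is correct and follows essentially the same route as the paper's: apply Bayes' rule to $\UpgC$, rewrite $\alpha_\xvec(y)\GenC(y|\xvec)$ via the definition of $\alpha$ so that the common factor $\gamma(y)\,p_\GenC(y)$ drops out, and invoke $\sum_{\mathbf{v}}\APPz(\mathbf{v}|z)=1$; the zero-APP case is dispatched by the same chain of observations, $\varphi_\GenC(\xvec|y)=0 \Rightarrow \xvec\neq\xvec^*$ (from (\ref{eq:APP_uleading})) $\Rightarrow \APPz(\xvec|z)=0$ (from (\ref{eq:Merge_App:notLeading})). The only cosmetic difference is that you fold the zero case into the single identity $\alpha_\xvec(y)\,\varphi_\GenC(\xvec|y)=\APPz(\xvec|z)\,\gamma(y)$ and then run the Bayes computation once, whereas the paper treats $\varphi_\GenC(\xvec|y)=0$ as a separate top-level case before the cancellation argument.
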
 
\begin{proof}
First consider the case where $\varphi_\GenC(\xvec|y)=0$. On the one hand, $\varphi_{\UpgC}(\xvec|y)=0$ by (\ref{eq:APP}) and (\ref{eq:UpgC_y}). On the other hand, 
(\ref{eq:APP_uleading}) implies that $\xvec\neq\xvec^*$, and thus $\APPz(\xvec|z)=0$ as well, by (\ref{eq:Merge_App:notLeading}).

Now assume $\varphi_\GenC(\xvec|y)>0$. In that case,
\begin{IEEEeqnarray*}{rCl}{\setlength{\IEEEnormaljot}{10000pt}}
\varphi_{\UpgC}(\xvec|y)&=&\frac{\mathbb{P}(\bfX=\xvec)\cdot\UpgC(y|\xvec)}{\sum\limits_{\mathbf{v}\in\mathcal{X}^t} \mathbb{P}(\bfX=\mathbf{v})\cdot\UpgC(y|\mathbf{v}) }\\ \\
&=& \frac{\mathbb{P}(\bfX=\xvec)\cdot\alpha_\xvec(y)\cdot\GenC(y|\xvec)}{\sum\limits_{\substack{\mathbf{v}\in\mathcal{X}^t:\\ \GenC(y|\mathbf{v})>0}} \mathbb{P}(\bfX=\mathbf{v})\cdot\alpha_\mathbf{v}(y)\cdot\GenC(y|\mathbf{v}) }\\ \\  
&=& \frac{\frac{\mathbb{P}(\bfX=\xvec)\cdot\GenC(y|\xvec)}{\varphi_\GenC(\xvec|y)}\cdot \gamma(y) \cdot  \APPz(\xvec|z) }
				 { \sum\limits_{\substack{\mathbf{v}\in\mathcal{X}^t:\\ \varphi_\GenC(\mathbf{v}|y)>0}} \frac{\mathbb{P}(\bfX=\xvec)\cdot\GenC(y|\mathbf{v})}{\varphi_\GenC(\mathbf{v}|y)}\cdot \gamma(y) \cdot  \APPz(\mathbf{v}|z)}\\ \\
&=& \frac{ \APPz(\xvec|z)}{\sum\limits_{\mathbf{v}\in\mathcal{X}^t}\APPz(\mathbf{v}|z)}\\ \\
&=& \APPz(\xvec|z) \;,
\end{IEEEeqnarray*}
where the fourth equality follows from (\ref{eq:sum_APPz}).
\end{proof}

We have seen in Lemma~\ref{lemma:same_APP} that with respect to $\UpgC$
, all the members of $\mathcal{B}(z)$ have the same APP values. 
As will be pointed in Lemma~\ref{lemma: equivalence_r} in the sequel, consolidating symbols with equal APP values results in an equivalent channel. Thus consolidating all the members of every bin $\mathcal{B}(z)$ to one symbol $z$  results in an \emph{equivalent} channel $\FinChannel$ defined by (\ref{eq:FinC}). Note that consolidation simply means mapping all the members of $\mathcal{B}(z)$ to $z$ with probability 1. 
Formally, we have for all $z\in\FinAlph$ and for all $\xvec\in\mathcal{X}^t$, 
\begin{equation}
\label{eq:consolidation}
 \FinC(z|\xvec)=\begin{cases}
									 \sum\limits_{y\in\mathcal{B}(z)} \UpgC(y|\xvec) & \text{if } z\in\MergeAlph \,,\\ \\
									 \UpgC(z|\xvec)														& \text{if } z\in\BoostSet  \,.
									 \end{cases}
 \end{equation}
Based on (\ref{eq:UpgC}), it can be easily shown that the alternative definition above agrees with the definition of $\FinChannel$ in (\ref{eq:FinC}).

The rest of this section is dedicated to proving Theorem~\ref{th:upgraded_MAC}. But before that, we address the equivalence of $\UpgC$ and $\FinC$ in Lemma~\ref{lemma: equivalence_r}, which is proved in the appendix. In essence, we claim afterward that due to this equivalence, showing that $\UpgC\succeq\GenC$ implies that $\FinC\succeq\GenC$.
 
\begin{lemm}
\label{lemma: equivalence_r}
Let $\GenChannel$ be a $t$-user MAC, and let $y_1,\ldots,y_r\in\GenAlph$ be $r$ letters of \emph{equal} APP values, for some positive integer $r$. That is, for all $\xvec\in\mathcal{X}^t$,
\begin{equation} \varphi(\xvec|y_i)=\varphi(\xvec|y_j) \;,\;\text{for all}\;1\leq i\leq j\leq r  \;.\label{eq: same_APP}\end{equation}
Now let $Q:\mathcal{X}^t\rightarrow\mathcal{Z}$ be the $t$-user MAC obtained by consolidating $y_1,\ldots,y_r$ to one symbol $z$. This would make the output alphabet
\[
 \mathcal{Z}= \GenAlph\setminus\{y_1,\ldots,y_r\} \cup \{z\} \;. 
\]
Then, $ \GenC\equiv Q$ (the MACs $\GenC$ and $Q$ are equivalent).
\end{lemm}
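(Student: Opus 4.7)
My plan is to prove equivalence by exhibiting the two required data-processing channels explicitly, one in each direction. The direction $Q \preceq W$ is essentially by construction: define an intermediate channel $\mathcal{P}_1: \GenAlph \to \mathcal{Z}$ that deterministically maps each of $y_1,\ldots,y_r$ to $z$ and acts as the identity on $\GenAlph \setminus \{y_1,\ldots,y_r\}$. Then for every $\xvec \in \mathcal{X}^t$ and $z' \in \mathcal{Z}$, the sum $\sum_{y \in \GenAlph} \GenC(y|\xvec)\mathcal{P}_1(z'|y)$ reproduces $Q(z'|\xvec)$, by direct inspection of the consolidation rule.

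The content of the lemma lies in the reverse direction $Q \succeq W$. The natural guess is an intermediate channel $\mathcal{P}_2: \mathcal{Z} \to \GenAlph$ that again acts as the identity outside $\{z\}$, and on receiving $z$ outputs $y_i$ with some input-independent probability $\lambda_i$. For the composition $\sum_{z'} Q(z'|\xvec)\mathcal{P}_2(y_i|z')$ to recover $\GenC(y_i|\xvec)$ for every $i$ and every $\xvec$, I need
\begin{equation*}
\GenC(y_i|\xvec) \;=\; \lambda_i \sum_{j=1}^{r} \GenC(y_j|\xvec) \; .
\end{equation*}
So the candidate choice is $\lambda_i = p_\GenC(y_i) / \sum_{j=1}^{r} p_\GenC(y_j)$, which is well defined provided the denominator is positive (which it is, since we have purged zero-probability output letters).

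The key step is to verify this identity, and this is exactly where the equal-APP hypothesis (\ref{eq: same_APP}) is used. For any $\xvec$ with $\mathbb{P}(\bfX=\xvec) > 0$ and any $i,j$ with $\varphi(\xvec|y_i) = \varphi(\xvec|y_j) > 0$, the APP formula (\ref{eq:APP}) gives
\begin{equation*}
\frac{\GenC(y_i|\xvec)}{p_\GenC(y_i)} \;=\; \frac{\GenC(y_j|\xvec)}{p_\GenC(y_j)} \; ,
\end{equation*}
so the ratio $\GenC(y_j|\xvec)/p_\GenC(y_j)$ is independent of $j \in \{1,\ldots,r\}$. Summing over $j$ yields
\begin{equation*}
\sum_{j=1}^{r} \GenC(y_j|\xvec) \;=\; \frac{\GenC(y_i|\xvec)}{p_\GenC(y_i)} \sum_{j=1}^{r} p_\GenC(y_j) \; ,
\end{equation*}
which rearranges to the required identity.

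The only subtlety I anticipate is handling the degenerate cases: inputs $\xvec$ with $\mathbb{P}(\bfX=\xvec) = 0$, and values of $i$ for which $\varphi(\xvec|y_i) = 0$. In the former case, the common convention (and the one implicit elsewhere in the paper) is that such $\xvec$ play no role, since channel transition probabilities on zero-probability inputs do not affect any information-theoretic figure of merit; alternatively, for such $\xvec$ one may assign the $\GenC(y_j|\xvec)$ arbitrarily without affecting equivalence. In the latter case, $\varphi(\xvec|y_i) = 0$ together with (\ref{eq: same_APP}) forces $\varphi(\xvec|y_j) = 0$ for all $j$, hence $\GenC(y_j|\xvec) = 0$ for every $j$, so both sides of the target identity vanish trivially. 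No other obstacle is expected; the proof is essentially bookkeeping once the two channels $\mathcal{P}_1$ and $\mathcal{P}_2$ are written down.
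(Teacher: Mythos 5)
Your proof is correct and follows essentially the same route as the paper's: the easy direction $Q \preceq \GenC$ by a deterministic merge, and the reverse direction by an explicit intermediate channel that, upon seeing $z$, randomly outputs one of the $y_i$ with input-independent weights. Your weights $\lambda_i = p_\GenC(y_i)/\sum_j p_\GenC(y_j)$ coincide with the paper's $e_i = a_i(\tilde\xvec)/A(\tilde\xvec)$ (which, under the equal-APP hypothesis, reduce to exactly the same ratio), so the two proofs differ only in notation and in your somewhat more explicit handling of the degenerate cases.
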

We have mentioned that equivalence of MACs is a transitive relation. Therefore, consolidating bin after bin we finally have  by induction that $\UpgC\equiv\FinC$.
\\
\begin{proof}[Proof of Theorem~\ref{th:upgraded_MAC}] 

We first prove part (\ref{th:upgraded_MAC:upgraded}) of the theorem, which claims that the approximation is well defined and upgraded with respect to $W$.
Since $\FinChannel$ is a result of applying consolidation on $\UpgChannel$, it follows that $\FinC$ is well defined as well.

According to Lemma~\ref{lemma:upgradation}, $\UpgC\succeq \GenC$. Since $\UpgC$ and $\FinC$ are equivalent, and since upgradation transitivity immediately follows from the definition, it follows that $\FinC\succeq\GenC$. 

We now move to part (\ref{th:upgraded_MAC:sumRateBound}) of the theorem, which concerns the sum-rate difference. Recall that the random variable $Y$ has been defined as the output of $\GenChannel$ when the input is $\bfX$. Similarly, define $Z'$ as the output of $\FinChannel$ when the input is $\bfX$.

To estimate the APPs for $\FinChannel$, we may use (\ref{eq:APP}) and (\ref{eq:consolidation}). First, consider a non-boost symbol $z\in\MergeAlph$. Then, for all $\xvec\in\mathcal{X}^t$,
\[
 \varphi_{\FinC}(\xvec|z)=\frac{\mathbb{P}(\bfX=\xvec)\cdot \sum_{y\in\mathcal{B}(z)} \UpgC(y|\xvec)}{p_{\FinC}(z)}=\frac{\sum_{y\in\mathcal{B}(z)}\varphi_{\UpgC}(\xvec|y)\cdot p_{\UpgC}(y)}{\sum_{\tilde{y}\in\mathcal{B}(z)} p_{\UpgC}(\tilde{y})}=\APPz(\xvec|z) \;, 
\]
where the last equality follows from Lemma~\ref{lemma:same_APP}. 
Second, consider a boost symbol $\kappa\in K$. Then, for all $\xvec\in\mathcal{X}^t$,
\[
 \varphi_{\FinC}(\xvec|\boost)\in\{ 0,1 \} \;.
\]

 Denote the entropy of the probability distribution defined in Section~\ref{subsec:psiSpecification} by
\begin{IEEEeqnarray}{rCl}
\label{eq:psiEntropy}
 \EntropyZ &=& \sum_{\xvec\in\mathcal{X}^t} \eta\left[\APPz(\xvec|z)\right] \;.  
\end{IEEEeqnarray}
Thus
\[
R(\FinC) \,=\,
 H(\bfX) -\sum_{z\in\MergeAlph} p_{\FinC}(z)\EntropyZ \\
	-\sum_{\boost\in\BoostSet} p_{\FinC}(\boost)H(\bfX|Z^{\,'}=\boost) \;.
\]
However, the last term is zero due to the following observation. Given that the output of the MAC $\FinC$ is $\boost_{\mathbf{v}}$ for some $\mathbf{v}\in\mathcal{X}^t$, the input $\bfX$ is known to be $\mathbf{v}$ (it is deterministic). Hence $H(\bfX|Z'=\boost_{\mathbf{v}})=0$ for all $\boost_{\mathbf{v}}\in\BoostSet$. Hence
\begin{IEEEeqnarray}{rCl}
\label{eq:FinCap}
R(\FinC)&=&
 H(\bfX) -\sum_{z\in\MergeAlph} p_{\FinC}(z)\EntropyZ \;.
\end{IEEEeqnarray}	

Next we define a new auxiliary quantity to ease the proof. But first, define the random variable $Z$ as the  letter in the merged output alphabet $\MergeAlph$ corresponding to $Y$. Namely, the realization $Z=z$ occurs whenever $Y$ is contained in $\mathcal{B}(z)$. The probability of that realization is  
\begin{equation}
\label{eq:pBz}
\Pz \triangleq \mathbb{P}(Z=z)= \sum_{y\in\mathcal{B}(z)} p_\GenC(y) \;.
\end{equation}
Note that the joint distribution $\Pz\cdot\APPz(\xvec|z)$ does \emph{not} necessarily  induce a true MAC
 (for instance, it may contradict the true distribution of $\bfX$). Nevertheless, we plug this joint distribution into the sum-rate expression, with due caution. In other words, we define a new quantity $J(\bfX;Z)$, which is a surrogate for mutual information. 
 Namely, define
\begin{IEEEeqnarray*}{rCl}
 J(\bfX;Z)&\triangleq& H(\bfX) - \sum_{z\in\MergeAlph} \Pz \cdot \EntropyZ\;\quad\qquad\qquad \\
								&=& H(\bfX) - \sum_{z\in\MergeAlph} \Pz \sum_{\xvec\in\mathcal{X}^t} \eta\left[\APPz(\xvec|z)\right]   \;, \label{eq:fakeCap}\IEEEyesnumber
\end{IEEEeqnarray*}
where $\EntropyZ$ is given by $(\ref{eq:psiEntropy})$.

Now, we would like to bound the increment in sum-rate. To this end, we prove two bounds and then sum. First, note that
\begin{IEEEeqnarray*}{rCl}
J(\bfX;Z)-R(\GenC)&=&
 \sum_{y\in\GenAlph} 				p_\GenC(y)\sum_{\xvec\in\mathcal{X}^t} \eta(\varphi_\GenC(\xvec|y))
															-\sum_{z\in\MergeAlph} 			\Pz\sum_{\xvec\in\mathcal{X}^t} \eta(\APPz(\xvec|z))\\
&=&\sum_{z\in\MergeAlph}\sum_{y\in\mathcal{B}(z)} 		p_\GenC(y)\sum_{\xvec\in\mathcal{X}^t} \left[\eta(\varphi_\GenC(\xvec|y))-\eta(\APPz(\xvec|z)) \right]\\ 
&\leq& \sum_{z\in\MergeAlph}\sum_{y\in\mathcal{B}(z)} 		p_\GenC(y)\cdot|\eta(\varphi_\GenC(\xvec^*|y))-\eta(\APPz(\xvec^*|z))|\;+
\sum_{z\in\MergeAlph}\sum_{y\in\mathcal{B}(z)} 		p_\GenC(y)\sum_{\xvec\neq\xvec^*} |\eta(\varphi_\GenC(\xvec|y))-\eta(\APPz(\xvec|z))| \\
&\leq& 2\cdot\frac{q-1}{\mu} \;, \IEEEyesnumber \label{eq:first_bound}
\end{IEEEeqnarray*}
where the last inequality is due to Lemma~\ref{lemma:eta_bound}.

For the second bound, we subtract (\ref{eq:fakeCap}) from (\ref{eq:FinCap}) to get
\[
R(\FinC)-J(\bfX;Z)=
    \sum_{z\in\MergeAlph} (\Pz-p_{\FinC}(z))\EntropyZ \;.
\]
By (\ref{eq:FinC_1}), (\ref{eq:alphaBounded}), and (\ref{eq:pBz}), the parenthesized difference on the RHS is non-negative. Thus, 
\begin{IEEEeqnarray*}{rCl}
R(\FinC)-J(\bfX;Z)&\leq&
		\ln q \cdot\sum_{z\in\mathcal{Z}} (\Pz-p_{\FinC}(z)) 
		=\ln q \cdot\left[1-\sum_{z\in\mathcal{Z}} p_{\FinC}(z)\right]
		=\ln q \cdot \sum_{z\in\BoostSet} p_{\FinC}(z)
		=\ln q \cdot \sum_{\xvec\in\mathcal{X}^t} \left(\mathbb{P}(\bfX=\xvec)\cdot\varepsilon_\xvec\right) \;.
\end{IEEEeqnarray*}
Hence, by Lemma~\ref{lemma:epsilon_bound} we have a second bound:
		\begin{IEEEeqnarray}{l}
		 R(\FinC)-J(\bfX;Z) \leq \ln q\cdot 	\frac{q(q-1)}{\mu}\;. \label{eq:second_bound}
		\end{IEEEeqnarray}
The proof follows by adding the bounds (\ref{eq:first_bound}) and (\ref{eq:second_bound}).

Our last task is to prove part (\ref{th:upgraded_MAC:outputAlphabetBound}) of the theorem, which bounds the output alphabet size. Recall that $|\MergeAlph|$ is bounded by Lemma~\ref{lemma:num_bin}. Recalling that the number of boost symbols is bounded by $|\BoostSet|\leq|\mathcal{X}^t|=q$, the proof easily follows.
\end{proof}

\section*{Acknowledgments}
We thank Erdal Ar\i{}kan and Eren \c{S}a\c{s}o\u{g}lu for valuable comments.

\appendix
\begin{proof}[Proof of Lemma~\ref{lemma:eta_diff}]
Let $x$, $i$ and $\delta$ be as in Lemma~\ref{lemma:eta_diff}. If $x$ is in the last region, then the lemma simply follows from the definition in (\ref{eq:quantization}). So, suppose $i<M$, and let
\begin{equation}
\label{eq:Delta}
 \Delta=b_{i+1}-b_i \leq \frac{1}{\mu} \;, 
\end{equation}
where the inequality follows from (\ref{eq:quantization}).

We now consider two cases. If $\frac{1}{e^2}\leq x\leq 1$, then $|\eta'(p)|\leq 1$. Thus, by the triangle inequality,
\[
 |\eta(p+\delta)-\eta(p)|\leq \int_p^{p+\delta} |\eta'(\xi)|\,d\xi\leq \delta\leq \Delta=\frac{1}{\mu} \;, 
 \]
where the equality follows by part (ii) of Lemma~\ref{lemma:quantization}. 

In the other case left to consider, $0\leq x< \frac{1}{e^2}$. Recall that $\mu\geq 5$ by the assumption made in (\ref{eq:mu_req}). Hence
\[
\frac{1}{\mu}<\frac{1}{e}-\frac{1}{e^2} \; ,
\]
which implies that
\[
 x+\delta\leq x+\Delta\leq x+\frac{1}{\mu}<\frac{1}{e} \;. 
\]
Hence, the derivative function $\eta'$ is positive in the range $[p,p+\Delta]$. 
By the definition of $\Delta$ in (\ref{eq:Delta}), we have that the point $x+\Delta$ belongs to another region:
\[
 b_{i+1}\leq x+\Delta < \frac{1}{e} \;.
 \]
Thus, since $\eta$ is strictly increasing in $[0,\frac{1}{e})$,
\begin{IEEEeqnarray*}{rCl}
|\eta(p+\delta)-\eta(p)| &=&    \eta(p+\delta)-\eta(p)  \\
												 &\leq& \eta(p+\Delta)-\eta(p)	\\
												 &=& 			\left[\eta(b_{i+1})-\eta(p)\right]+\left[\eta(p+\Delta)-\eta(b_{i+1})\right] \;.
\end{IEEEeqnarray*} 
Hence, by the fundamental theorem of calculus,
\[
|\eta(p+\delta)-\eta(p)| \leq  \int_{p}^{b_{i+1}} \eta'(\xi)\,d\xi+\int_{b_{i+1}}^{x+\Delta} \eta'(\xi)\,d\xi	\;.		
\]
Since $\eta'(p)$ is a strictly decreasing function of $p$, the second integral can be upper-bounded by   
\[
\int_{b_{i+1}}^{p+\Delta} \eta'(\xi)\,d\xi < \int_{b_{i+1}-\Delta}^{p} \eta'(\xi)\,d\xi	\;.	
\] 
By (\ref{eq:Delta}), we have that $b_{i+1}-\Delta=b_i \;$.Thus, 
\begin{IEEEeqnarray*}{rCl}
|\eta(p+\delta)-\eta(p)|  &\leq&  \int_{p}^{b_{i+1}} \eta'(\xi)\,d\xi+\int_{b_{i}}^{p} \eta'(\xi)\,d\xi\\
											    &=&	   \eta(b_{i+1})-\eta(b_i)\leq\frac{1}{\mu}\;,		
\end{IEEEeqnarray*} 
where the last inequality follows from (\ref{eq:quantization}).



\end{proof}

\begin{proof}[Proof of Lemma~\ref{lemma:interval_width}]
Let us look at two quantization intervals $i$ and $j$, where $1\leq i<j< M$. Our aim is to prove that $\Delta_i \leq \Delta_j$. Consider first the simpler case in which $\Delta_j = \nicefrac{1}{\mu}$. Recall from (\ref{eq:quantization}) that $\nicefrac{1}{\mu}$ is an upper bound on the length of any interval, and specifically on $\Delta_i$. Thus, in this case, $\Delta_i \leq \Delta_j$.

Now, let us consider the case in which $\Delta_j < \nicefrac{1}{\mu}$. Thus, by (\ref{eq:quantization}), we must have that
\begin{equation}
\label{eq:etaJDiffTowardsContradiction}
\eta(b_{j+1}) - \eta(b_j) = \frac{1}{\mu} \; .
\end{equation}
We will now assume to the contrary that $\Delta_j < \Delta_i$, and show a contradiction to (\ref{eq:etaJDiffTowardsContradiction}).

Since $\Delta_j < \nicefrac{1}{\mu}$, we must have by part (\ref{lemma:quantization:it:otherwise}) of Lemma~\ref{lemma:quantization} that $b_j < \frac{1}{e^2}$. Since every interval length is at most $\nicefrac{1}{\mu}$, we must have that $\Delta_i \leq \nicefrac{1}{\mu}$. By the above, and recalling the assumption in (\ref{eq:mu_req}) that $\mu \geq 5$, we deduce that
\[
b_j + \Delta_j < b_j + \Delta_i \leq b_j + \frac{1}{\mu} < \frac{1}{e^2} + \frac{1}{\mu} < \frac{1}{e} \; .
\]
Thus, since $\eta'(p)$ is positive for $p < \frac{1}{e}$,
\[
\eta(b_{j+1})-\eta(b_j)	=\int_{b_j}^{b_j+\Delta_j} \eta'(p)\,dp 
											  <\int_{b_j}^{b_j+\Delta_i} \eta'(p)\,dp \; . 
\]
Now, since $b_i < b_j$ and $\eta'(p)$ is a strictly decreasing function of $x$, we have that
\[
\int_{b_j}^{b_j+\Delta_i} \eta'(p)\,dp < \int_{b_i}^{b_i+\Delta_i} \eta'(p)\,dp = \eta(b_{i+1}) - \eta(b_i) \; .
\]
Lastly, since $b_j < \frac{1}{e^2}$, we have that $b_{i+1} < \frac{1}{e^2}$. Thus, by part (\ref{lemma:quantization:it:firstCase}) of Lemma~\ref{lemma:quantization} we have that
\[
\eta(b_{i+1})-\eta(b_i)=\frac{1}{\mu} \;.
\] 
From the last three displayed equations, we deduce that
\[
\eta(b_{j+1})-\eta(b_j) < \frac{1}{\mu} \; ,
\]
which contradicts (\ref{eq:etaJDiffTowardsContradiction}). 
\end{proof}

\begin{proof}[Proof of Lemma~\ref{lemma:gamma_bound}]
We already know that $\gamma(y) \leq 1$, by (\ref{eq:gamma_range}). Thus, we now prove the lower bound on $\gamma(y)$.
To this end, we have by (\ref{eq:quantization}) and (\ref{eq:leading_APP_bound}) that for all $z\in\MergeAlph$ and $y\in\mathcal{B}(z)$,
\[
\APPz(\xvec^*|z)-\varphi_\GenC(\xvec^*|y)\leq (q-1)\cdot\frac{1}{\mu} \; .
\]
By (\ref{eq:psiLowerBound}), we can divide both sides of the above by $\APPz(\xvec^*|z)$ and retain the inequality direction. The result is
\begin{IEEEeqnarray*}{rCl}
\label{eq:leading_ratio}
 \frac{\varphi_\GenC(\xvec^*|y)}{\APPz(\xvec^*|z)} &\geq& 1-(q-1)\cdot\frac {\nicefrac{1}{\mu}}
																																					  {\APPz(\xvec^*|z)}  \\
																									 &\geq& 1 - \frac{q(q-1)}{\mu} \; , 
\end{IEEEeqnarray*}
where the last inequality yet again follows from (\ref{eq:psiLowerBound}). Thus, we have proved the lower bound on $\gamma(y)$ as well. 
Since, by our assumption in (\ref{eq:mu_req}), $\mu\geq q(q-1)$, the lower bound is indeed non-negative. 
\end{proof}

\begin{proof}[Proof of Lemma~\ref{lemma: equivalence_r}]
Let $\GenC$, $Q$ and $y_1,\ldots,y_r$ be as in Lemma~\ref{lemma: equivalence_r}. We would like to show that $\GenC$ and $Q$ satisfy both
\[
 Q\preceq\GenC \;\text{ and }\; Q\succeq\GenC \;.
 \]
It is obvious that $Q$ is degraded with respect to $\GenC$. This is because $Q$ is obtained from $\GenC$ by mapping with probability 1 one letter to another. The letters $y_1,\ldots,y_r$ are mapped into $z$, whereas the rest of the letters in $\GenAlph$ are mapped to themselves.

We must now show that $Q:\mathcal{X}^t\rightarrow\mathcal{Z}$ is upgraded with respect to $\GenChannel$. Namely, we must furnish an intermediate channel  
$\mathcal{P}:\MergeAlph\rightarrow\GenAlph$. Denote
\begin{IEEEeqnarray*}{l}
 a_i(\xvec) \triangleq \GenC(y_i|\xvec)= \frac{p_\GenC(y_i) \varphi_\GenC(\xvec|y_i)}{\mathbb{P}(\bfX=\xvec)\cdot} \;,\\ 
 A(\xvec)   \triangleq Q(z|\xvec)  = \sum_{1\leq i\leq r} a_i(\xvec) \;,
\end{IEEEeqnarray*}
 for all $\xvec\in\mathcal{X}^t$.  Note that by our running assumption on non-degenerate output letters,  $A(\tilde{\xvec}) > 0$ for some $\tilde{\xvec}\in\mathcal{X}^t$.
So let
\[
e_i \triangleq \frac{a_i(\tilde{\xvec})}{A(\tilde{\xvec})} \;.
\]
 Given (\ref{eq: same_APP}), we get that 
  \[
	e_i \cdot A(\xvec)=a_i(\xvec)
	\]
 for all $\xvec\in\mathcal{X}^t\,$. Hence we define for all $y\in\GenAlph$ and $s\in\MergeAlph$, 
\[
\mathcal{P}(y|s)= \begin{cases}
									e_i					&\text{if } (y,s)=(y_i,z) \;\text{for some}\; 1\leq i\leq r \;,\\
									1						&\text{if } y=s \;,\\
									0						&\text{otherwise.}
									\end{cases}
\]
Trivial algebra finishes the proof.
\end{proof}

\twobibs{
\bibliographystyle{IEEEtran}
\bibliography{\bibfilePath}
}
{
\ifdefined\bibstar\else\newcommand{\bibstar}[1]{}\fi

}

\end{document}